 \newtheorem{theorem}{Theorem}[section]
 \newtheorem{lemma}{Lemma}[section]
 \newtheorem{corollary}{Corollary}[section]
 \newtheorem{claim}{Claim}
 \newtheorem{definition}{Definition}[section]
 \newcommand{\qed}{\vrule height4pt width4pt depth2pt}
\newcounter{algleo}
\newlength{\lefttab}
\newlength{\numberoffset}
\newenvironment{algleo}%
  {\trivlist
   \topsep=0pt\parsep=0pt\itemsep=0pt
   \def\li{\item\refstepcounter{algleo}\makebox[0.8em][r]{\thealgleo\hspace{\numberoffset}}
       \hangafter1\hangindent1.8em\noindent}%
   \def\linonumber{\item\makebox[0.8em][r]{\hspace{\numberoffset}}
       \hangafter1\hangindent1.8em\noindent}%
   \addtolength{\lefttab}{1.25em}
   \addtolength{\numberoffset}{1.25em}
   \leftskip=\lefttab}%
  {\endtrivlist}
\begin{document}

\title{Algorithmic Aspects of Homophyly of Networks}

\author{
Angsheng Li
\thanks{Institute of Software, Chinese Academy of Sciences, China.
  Email: \texttt{angsheng@ios.ac.cn}.
  The author is supported by the hundred talent program of the Chinese Academy of Sciences,
  and the grand challenge program, {\it Network Algorithms and Digital Information},
  Institute of Software, Chinese Academy of Sciences.}
\and
Peng Zhang
\thanks{Corresponding author.
  School of Computer Science and Technology, Shandong University, China
  Email: \texttt{algzhang@sdu.edu.cn}.
  The author is supported by the National Natural Science Foundation of China (60970003),
  the Special Foundation of Shandong Province Postdoctoral Innovation Project (200901010),
  and the Independent Innovation Foundation of Shandong University (2012TS072).}
}

\maketitle

\begin{abstract}
We investigate the algorithmic problems of the {\it homophyly
phenomenon} in networks. Given an undirected graph $G = (V, E)$ and a vertex
coloring $c \colon V \rightarrow \{1, 2, \cdots, k\}$ of $G$, we say that
a vertex $v\in V$ is {\it happy} if $v$ shares the same color with all its
neighbors, and {\it unhappy}, otherwise, and that an edge $e\in E$ is
{\it happy}, if its two endpoints have the same color, and {\it unhappy},
otherwise. Supposing $c$ is a {\it partial vertex coloring} of $G$,
we define the Maximum Happy Vertices problem (MHV, for
short) as to color all the remaining vertices such that the number
of happy vertices is maximized, and the Maximum Happy Edges problem
(MHE, for short) as to color all the remaining vertices such that
the number of happy edges is maximized.

Let $k$ be the number of colors allowed in the problems.
We show that both MHV and MHE can be solved in polynomial time
if $k = 2$, and that both MHV and MHE are NP-hard if $k \geq 3$.
We devise a $\max \{1/k, \Omega(\Delta^{-3})\}$-approximation algorithm
for the MHV problem, where $\Delta$ is the maximum degree of vertices in
the input graph, and a $1/2$-approximation algorithm for the MHE problem.
This is the first theoretical progress of these two natural and fundamental
new problems.
\end{abstract}

\section{Introduction}
\label{sec - introduction}
Networks or at least social networks heavily depend on human or
social behaviors. It is believed that {\em homophyly}~\cite[Chapter
4]{EK10} is one of the most basic notions governing the structure of
social networks. It is a common sense principle that people are more
likely to connect with people they like, as what says in the proverb
``birds of a feather flock together''.

Li and Peng in \cite{LP11,LP12} gave a mathematical definition of
{\it community}, and {\it small community phenomenon} of networks,
and showed that networks from some classic models do satisfy the
small community phenomenon. A. Li and J. Li et al. \cite{LLPP11} proposed
a homophyly model by introducing a color for every vertex in the classical
preferential attachment
model such that networks generated from this model satisfy
simultaneously the following properties: 1) power law degree
distribution, 2) small diameter property, 3) vertices of the same color
naturally form a small community, and 4) almost all vertices are
contained in some small communities, i.e., the small community
phenomenon of networks. This result implies the {\it homophyly law}
of networks that the mechanism of the small community phenomenon is
homophyly, and that vertices within a small community share
remarkable common features.

A. Li and J. Li et al. \cite{LLPP12} showed that many real networks
satisfy exactly the homophyly law, in which an interesting application is
the prediction and confirmation of keywords from a paper citation network
of high energy physics
theory\footnote{\texttt{http://snap.stanford.edu/data/cit-HepTh.html}.}.
The network contains $27,770$ vertices (i.e., papers) and $352,807$ edges
(i.e., citations). All the papers have titles and abstracts, but only $1,214$
papers have keywords listed by their authors.
We interpret the keywords of a paper to be a {\it function} of the paper.
By the homophyly law, vertices within a small community of the network must
share remarkable common features (keywords here).
The prediction is as follows: 1) to find a
small community from each vertex, if any, 2) to extract
the most popular $5$ keywords from the known keywords in a community,
as the remarkable common features of this community,
3) to predict that (all or part of) the $5$ remarkable common
keywords are keywords of a paper in the community, 4) to confirm a
prediction of keyword $K$ for a paper $P$, if $K$ appears in either
the title or the abstract of paper $P$. It is a surprising result
that this simple prediction confirms keywords for $19,200$ papers in
the network. This experiment implies that real networks do satisfy
the homophyly law, and that the homophyly law is the principle for
prediction in networks.

The keywords can be viewed as the attributes of vertices in a network.
The above experimental result suggests a natural theoretical problem that,
given a network in which some vertices have their attributes unfixed,
how to assign attributes to these vertices such that the resulting
network reflects the homophyly law in the most degree?
Some attributes of a vertex cannot be changed, such as nationality, sex,
color and language, but some other attributes can be changed,
such as interest, job, income and working place.
For simplicity, we consider the case that each vertex contains only one
alterable attribute, i.e., the network is a $1$-dimensional network.
Consider the following scenario. Suppose in a company there are many employees
which constitutes a friendship network.
Some employees have been assigned to work in some departments of the
company, while the remaining employees are waiting to be assigned.
An employee is {\em happy}, if s/he works in the same department with all of
(or $\rho$ fraction of for some $\rho \in (0, 1]$, or at least $q$ for some
integer $q > 0$) her/his friends; otherwise s/he is {\em unhappy}.
Similarly, a friendship is {\em happy} (or lucky) if the two
related friends work in the same department; otherwise the friendship is
{\em unhappy}. Our goal is to achieve the greatest social benefits, that is,
to maximize the number of {\em happy vertices} (similarly, {\em happy edges})
in the network.



We can easily express the above problems as graph coloring problems,
just identifying each attribute value with a different color.
Hence we get two specific graph coloring problems, as defined below.

\begin{definition}[The MHV problem]
(Instance) In the Maximum Happy Vertices (MHV) problem, we are given
an undirected graph $G = (V, E)$, a color set $C = \{1, 2, \cdots, k \}$,
and a partial vertex coloring function $c \colon V \to C$. We say that $c$
is a partial function in the sense that $c$ assigns colors to part of
vertices in $V$.

(Query) A vertex is {\em happy} if it shares the same color with all its
neighbors, otherwise it is {\em unhappy}. The task is to extend $c$ to a total
function $c'$ such that the number of happy vertices is maximized.
\end{definition}

\begin{definition}[The MHE problem]
(Instance) The input of the Maximum Happy Edges (MHE) problem is the same
as that of the MHV problem.

(Query) An edge is {\em happy} if its two endpoints have the same color,
otherwise it is {\em unhappy}. The goal is to extend $c$ to a total function $c'$
such that the number of happy edges is maximized.
\end{definition}

The vertex coloring defined by the total function $c' \colon V \rightarrow C$
in MHV and MHE is called a {\em total vertex coloring}.
In general, a (partial or total) vertex coloring can be denoted by
$(V_1, V_2, \cdots, V_k)$, where $V_i$ is the set of all vertices having
color $i$. A total vertex coloring is a partition of $V(G)$, while a partial
vertex coloring may not. Therefore, the MHV and MHE problems are two extension
problems from a partial vertex coloring to a total vertex coloring.
We remark that the coloring for our case is completely different from the
well-known Graph Coloring problem, which requires that the two endpoints of
an edge must be colored differently and asks to color a graph in such a way
by using the minimized number of colors. We use the notion of color just
for intuition.

If in the MHV problem the color number $k$ is a constant, the problem is
denoted by $k$-MHV. For the specific values of $k$, we have the 2-MHV problem,
the 3-MHV problem, and so on. Note that in the original MHV problem $k$ is
given as a part of the input. Similarly, we have the $k$-MHE problem for
constant $k$, with 2-MHE, 3-MHE, etc. being its specific problems.

We remark that both the MHV and MHE problems are natural and fundamental
algorithmic problems, and that they have not appeared yet in literature.
The reasons could be two folds. On the one hand, we ask the questions from
our network applications which did not happen before; on the other hand,
the meaning of coloring has been specified previously so that the two
endpoints of an edge must have different colors.
%
%
We notice that the current version of our problems may not really help
network applications much because of their simplicity.
For real network applications, probably the experimental method \cite{LLPP12}
introduced at the beginning of this section is fine enough.
However, this has no theoretical guarantee, owing to different structures
of networks. Our problems seem essentially new and fundamental algorithmic
problems. Theoretical analysis of the problems are always helpful to
understand the nature of the problems, and hence are very welcome.

\subsection{Our Results}
We investigate algorithms to solve the MHV and MHE problems.
It is easy to see that the partial function $c$ plays an important role
in the MHV and MHE problems. If none of the vertices in the input graph has
a pre-specified color, then the MHV and MHE problems are trivial. The optimal
solution just assigns one arbitrary color to all the vertices. This will
make all vertices and all edges happy.

We prove that the MHV and MHE problems are NP-hard. Interestingly, the
complexity of $k$-MHV and $k$-MHE dramatically changes when $k$ changes
from 2 to 3. Specifically, we prove that both 2-MHV and 2-MHE can be solved
in polynomial time, while both $k$-MHV and $k$-MHE are actually NP-hard for
any constant $k \geq 3$. We thus seek approximation algorithms for the MHV
and MHE problems, and their variants $k$-MHV and $k$-MHE ($k \geq 3$).

We design two approximation algorithms
{\sc Greedy-MHV} (Subsection \ref{subsec - greedy approxalg for MHV}) and
{\sc Growth-MHV} (Subsection \ref{subsec - subset-growth approxalg for MHV})
for the MHV problem and its variant $k$-MHV.
Algorithm {\sc Greedy-MHV} is a simple greedy algorithm with approximation
ratio $1/k$. Algorithm {\sc Growth-MHV} is an algorithm based on the
subset-growth technique with approximation ratio $\Omega(\Delta^{-3})$,
where $\Delta$ is the maximum degree of vertices in the input graph.
In real networks, $\Delta$ is usually $\mbox{poly} \log n$, implying that
the ratio $\Omega(\Delta^{-3})$ is reasonable.
As Algorithm {\sc Growth-MHV} is executing, more and more vertices are
colored. According to the current vertex coloring for the input graph,
we define several types for the vertices. (Note that the types here are not
colors.) Algorithm {\sc Growth-MHV} works based on carefully classifying
all the vertices into several types.

We can extend our algorithms for MHV to deal with two more natural variants
SoftMHV and HardMHV.
In the SoftMHV problem, a vertex $v$ is happy if $v$ shares the same color with
at least $\rho \deg(v)$ neighbors, where $\rho$ (that is, the soft threshold)
is a number in $(0, 1]$ and $\deg(v)$ is the degree of vertex $v$.
In the HardMHV problem, a vertex $v$ is happy if $v$ shares the same color with
at least $q$ neighbors, where $q$ (that is, the hard threshold) is an integer.
We show that the SoftMHV and HardMHV problems can also be approximated within
$\max \{1/k, \Omega(\Delta^{-3})\}$.
The approximation algorithms for SoftMHV and HardMHV, given in the Appendix
for completeness, are similar to that for MHV.

For the MHE problem and its variant $k$-MHE, we devise a simple
approximation algorithm based on a division strategy, namely,
Algorithm {\sc Division-MHE} (Section \ref{sec - algorithms for MHE}).
The approximation ratio is proved to be 1/2.

\subsection{Related Work and Relation to Other Problems}
\label{sec - related works}
The MHV and MHE problems are two quiet natural vertex classification problems
arising from the homophyly phenomenon in networks.
Classification is a fundamental problem and has wide applications in
statistics, pattern recognition, machine learning, and many other fields.
Given a set of objects to be classified and a set of colors, a classification
problem can be depicted as from a very high level assigning a color to
each object in a way that is consistent with some observed data or structure
that we have about the problem \cite{BFOS84,KT02}.
In our problems, the observed strucute is homophyly.
Since the MHV and MHE problems are essentially new, in the following we just
show some closely related problems and results.

Thomas Schelling \cite{S72,S78}, the Nobel economics prize winner, showed
by experiments how global patterns of spatial segregation arise from the
effect of homophyly operating at the local level.
The experiments in \cite{S72} are given in one-dimensional and two-dimensional
geometric models.
From a more general viewpoint of graph theory, Schelling's experiments,
although given in geometric models, can be viewed as how to remove and add
edges from/to a graph whose vertices are all colored by some colors
such that the resulting graph possesses the homophyly property.
In contrast, the MHV and MHE problems are how to color the vertices
in a given graph whose part of vertices are already colored such that
the resulting graph possesses the homophyly property.

The Multiway Cut problem \cite{EL92,DJP+94,CKR00,KKS+04} should be
the traditional optimization problem that is most related to MHV and MHE.
Given an undirected graph $G = (V, E)$ with costs defined on edges and
a terminal set $S \subseteq V$, the Multiway Cut problem asks for a set
of edges (called a {\em multiway cut}, or simply a {\em cut}) with the minimum
total cost such that its removal from graph $G$ separates all terminals in $S$
from one another. The Multiway Cut problem in general graphs is NP-hard even
the terminal set contains only three terminals and each edge has a unit
cost \cite{DJP+94}. The current best approximation ratio known for this
problem is 1.3438 \cite{KKS+04}.

Removing a minimum multiway cut from a graph breaks the graph into several
components such that each component contains exactly one terminal.
From the viewpoint of graph coloring, this is equivalent to coloring the
uncolored vertices in a graph in which each terminal has a distinct
pre-specified color, such that the number of happy edges is maximized.
Therefore, the MHE problem is actually the dual of the Multiway Cut problem.
See Figure \ref{fig - multiway cut and vertex coloring} for an example.
(More precisely, the dual of Multiway Cut is only a special case of MHE,
since in MHE there may be more than one vertices having the same pre-specified
color.) However, Multiway Cut and MHE are quite different in terms of
approximation, since one is a maximization problem while the other is
a minimization problem.

\begin{figure}
\begin{center}
\includegraphics*[width=0.45\textwidth]{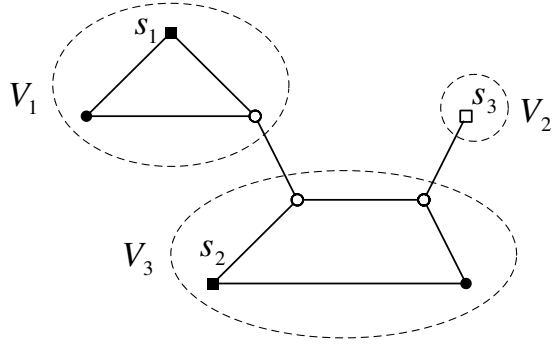}
\end{center}
\caption{An instance of Multiway Cut and the induced vertex coloring.
The square vertices are terminals and have pre-specified colors,
while the round vertices are non-terminal vertices.
The hollow vertices are border vertices.}
\label{fig - multiway cut and vertex coloring}
\end{figure}

For a vertex subset $V' \subseteq V(G)$ of graph $G$, we define the {\em border}
of $V'$ to be the set of vertices in $V'$ that has a neighbor not in $V'$.
Given a vertex coloring $(V_1, V_2, \cdots, V_k)$ of graph $G$,
the vertices in the border of each $V_i$ are obviously unhappy.
The MHV problem, which finds a vertex coloring that maximizes the number of
happy vertices, is actually equivalent to finding a vertex coloring
$(V_1, V_2, \cdots, V_k)$ for a graph in which some vertices are already
colored, such that the total number of vertices in borders of all $V_i$'s
is minimized. Please refer to Figure \ref{fig - multiway cut and vertex coloring}
for an example. The latter problem we just introduce is a new minimization
problem; the MHV problem and this new problem are dual to each other.

From the above analysis, one can see that the partial function $c$ in
the MHE problem (and the MHV problem), which assigns colors to part of
vertices of the input graph, actually simulates and generalizes
the {\em terminal set} part in the Multiway Cut problem.

Kann and Khanna et al. \cite{KKL+97} studied the Max $k$-Cut problem \cite{FJ97}
and its dual, that is, the Min $k$-Partition problem \cite{KKL+97}.
Given an undirected graph $G = (V, E)$, the Min $k$-Partition problem asks
to find a vertex coloring $c \colon V \rightarrow \{1, 2, \cdots, k\}$
such that the number of edges whose two endpoints have the same color
(i.e., the happy edges in our setting) is minimized.

According to the way of definitions in \cite{KKL+97}, we can define
the dual of the Min $k$-Cut problem \cite{SV95} as follows:
Given an undirected graph $G = (V, E)$ and an integer $k > 0$,
finding an edge subset whose removal breaks graph $G$ into {\em exactly}
$k$ components, such that the number of remaining edges is maximized.
Let's call this problem the Max $k$-Partition problem.
In other words, Max $k$-Partition asks for a total vertex coloring
$c' \colon V \rightarrow \{1, 2, \cdots, k\}$ such that the number of
happy edges is maximized, where $c'$ should be a surjective function
(that is, for each color $i$ there exists a vertex whose color is $i$).

The Max $k$-Partition problem defined as above is close to the MHE problem,
but they are still different in the obvious way:
In Max $k$-Partition there is no any vertex having a pre-specified
color and the required vertex coloring function $c'$ must be surjective,
while in MHE there must be some vertices having the pre-specified colors
and the required vertex coloring function $c'$ may not be surjective.

\bigskip
{\bf Notations.}
Let $G = (V, E)$ be a graph. Let $n = |V|$ and $m = |E|$.
Suppose $v \in V$ is a vertex. Denote by $N(v)$ the set of neighbors of $v$.
As usual, $\deg(v)$ means the degree of $v$, i.e, $\deg(v) = |N(v)|$.
Denote by $N^2(v)$ the set of neighbors of neighbors of $v$ (not including
$v$ itself), i.e., the vertices within distance 2 of $v$ (assume each edge
has unit distance).

Given a vertex coloring $c$, for a (colored or uncolored) vertex $v$, define
$N^u(v)$ as the set of vertices in $N(v)$ that has not yet been colored.
For a colored vertex $v$, define $N^s(v)$ as the set of vertices in $N(v)$
having the {\em same} color as $c(v)$, $N^d(v)$ as the set of vertices in
$N(v)$ having colors {\em different} to $c(v)$.

Given an instance $\cal I$ of some optimization problem $\cal P$,
we use $OPT({\cal I})$ ($OPT$ for short) to denote the optimum (that is,
the value of an optimal solution) of the instance.
Let $\cal A$ be an algorithm for problem $\cal P$.
We use $SOL({\cal I})$ ($SOL$ for short) to denote the value of the solution
found by algorithm $\cal A$ on instance $\cal I$ of problem $\cal P$.
In addition, $OPT$ and $SOL$ also denote the corresponding solutions,
abusing notations slightly.

\bigskip
{\bf Organization of the paper.}
The remaining of the paper is organized as follows.
In Section \ref{sec - algorithms for MHV}, we show that 2-MHV is
polynomial-time solvable, and give the greedy approximation algorithm
and the subset-growth approximation algorithm for the MHV and $k$-MHV
($k \geq 3$) problems.
In Section \ref{sec - algorithms for MHE}, we show that 2-MHE is
polynomial-time solvable, and give the division-strategy based approximation
algorithm for the MHE $k$-MHE ($k \geq 3$) problems.
In Section \ref{sec - Hardness results}, we prove the NP-hardness for the MHE,
$k$-MHE ($k \geq 3$), MHV, and $k$-MHV ($k \geq 3$) problems.
In Section \ref{sec - conclusions} we conclude the paper by
introducing some future work. In the Appendix, we give approximation
algorithms for the SoftMHV and HardMHV problems.

\section{Algorithms for MHV}
\label{sec - algorithms for MHV}
In Subsection \ref{subsec - 2-MHV Is in P}, we give the polynomial time exact
algorithm for the 2-MHV problem. In Subsection \ref{subsec - Approxalgs for MHV},
we give the approximation algorithms {\sc Greedy-MHV} and {\sc Growth-MHV} for
the MHV problem.

\subsection{2-MHV Is in P}
\label{subsec - 2-MHV Is in P}
Let $U$ be a finite set. Recall that a function
$f \colon 2^U \rightarrow \mathbf{Z}^+$ is said to be submodular if
$f(X) + f(Y) \geq f(X \cup Y) + f(X \cap Y)$ holds for all $X, Y \subseteq U$.
Given a vertex subset $V' \subseteq V(G)$, define function $f(V')$ to be
the number of vertices in $V'$ that has neighbors outside of $V'$, i.e.,
$f(V')$ is the size of the border (see Subsection \ref{sec - related works})
of $V'$. It is easy to verify that $f$ is a submodular function.

Consider the 2-MHV problem, in which the color set $C$ contains only two
colors 1 and 2. This problem can be solved in polynomial
time.

\begin{theorem}
The 2-MHV problem can be solved in $O(mn^7\log n)$ time.
\end{theorem}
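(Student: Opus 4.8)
The plan is to recast 2-MHV as a single unconstrained submodular function minimization (SFM) instance and then invoke a known polynomial-time SFM algorithm. First I would note that a total $2$-coloring is nothing but a partition $(V_1, V\setminus V_1)$, where $V_1$ is the set of vertices assigned color $1$. A vertex $v\in V_1$ is happy precisely when none of its neighbors lies in $V\setminus V_1$, i.e. when $v$ is not in the border of $V_1$; symmetrically for $v\in V\setminus V_1$. Since the two borders are disjoint (each is contained in its own class), the number of happy vertices equals $n - f(V_1) - f(V\setminus V_1)$, where $f$ is the border-size function introduced in the excerpt. (Isolated vertices are correctly treated as happy, since they belong to no border.) Thus maximizing happy vertices is equivalent to minimizing $g(V_1) := f(V_1) + f(V\setminus V_1)$.

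Next I would check that $g$ is submodular. The border function $f$ is submodular, as already observed, and the reflected function $S\mapsto f(V\setminus S)$ is submodular as well: using $(V\setminus X)\cup(V\setminus Y)=V\setminus(X\cap Y)$ and $(V\setminus X)\cap(V\setminus Y)=V\setminus(X\cup Y)$, submodularity of $f$ gives $f(V\setminus X)+f(V\setminus Y)\ge f(V\setminus(X\cap Y))+f(V\setminus(X\cup Y))$, which is exactly the submodularity inequality for the reflected function. As a sum of two submodular functions, $g$ is submodular. I would then encode the partial coloring as hard constraints: let $A$ and $B$ be the vertices precolored $1$ and $2$ respectively (disjoint, since $c$ is a function), so that the feasible sets are exactly those with $A\subseteq V_1\subseteq V\setminus B$. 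Restricting to the ground set $U=V\setminus(A\cup B)$ and minimizing $h(S):=g(A\cup S)$ over all $S\subseteq U$ turns this into a genuinely \emph{unconstrained} SFM instance; fixing coordinates of a submodular function again yields a submodular function, so $h$ is submodular, and an optimal $S^{\ast}$ recovers the optimal coloring via $V_1=A\cup S^{\ast}$.

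For the running time I would apply a combinatorial polynomial-time SFM algorithm, for instance that of Iwata, Fleischer, and Fujishige, which solves the problem using $O(n^{7}\log n)$ calls to the evaluation oracle. In our case a single evaluation of $h$ reduces to evaluating $f$ on two subsets, and computing a border size amounts to scanning all edges and testing endpoint membership, costing $O(m)$ time per query. Multiplying the $O(n^{7}\log n)$ oracle queries by the $O(m)$ per-query cost yields the stated $O(mn^{7}\log n)$ bound. (One could equally note that the function values of $g$ are bounded by $2n$, so even a weakly polynomial routine applies, with the $\log n$ factor again bounded in terms of $n$.)

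The hard part will not be the running-time bookkeeping, which follows once the reduction is in place, but rather establishing the reduction cleanly. The main obstacle is that the natural objective is the \emph{sum} $f(V_1)+f(V\setminus V_1)$ rather than a single monotone or symmetric submodular function, so I must argue carefully that submodularity is preserved both under reflection and under summation, and that imposing the precoloring constraints (fixing some elements in and others out) does not destroy the SFM structure. Once these structural points are verified, the theorem reduces to citing an off-the-shelf SFM algorithm with the appropriate oracle-cost accounting.
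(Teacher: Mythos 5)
Your proposal is correct and takes essentially the same route as the paper: both reduce 2-MHV to minimizing the submodular border-sum $f(V_1) + f(V \setminus V_1)$ subject to the precoloring constraints and then invoke the Iwata--Fleischer--Fujishige algorithm, with $O(n^7 \log n)$ oracle calls at $O(m)$ evaluation cost each giving the stated $O(mn^7\log n)$ bound. The only difference is in handling the constraints: the paper merges the precolored classes into terminals $s$ and $t$ and cites a lemma of Zhao--Nagamochi--Ibaraki for the constrained $s$-$t$ minimization, whereas you make the reduction self-contained by verifying that reflection, summation, and fixing elements in/out all preserve submodularity, yielding an unconstrained SFM instance on the uncolored vertices --- a slightly more explicit but equivalent argument.
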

\begin{proof}
Let $V_1^{org}$ be the set of vertices
that are colored by color 1 by the partial function $c$, and $V_2^{org}$
be the analogous vertex subset corresponding to color 2.
Then the 2-MHV problem is equivalent to finding a cut $(V_1, V_2)$
such that $V_i^{org} \subseteq V_i$ for $i=1, 2$ and $f(V_1) + f(V_2)$
is minimized. We can do this by merging all vertices in $V_1^{org}$ to
a single vertex $s$, all vertices in $V_2^{org}$ to a single vertex $t$,
and finding an $s$-$t$ cut $(V_1, V_2)$ on the resulting graph such that
$f(V_1) + f(V_2)$ is minimized.
As pointed out by \cite[Lemma 3]{ZNI05}, finding such a cut can be done
by an algorithm in \cite{IFF01} for minimizing submodular functions
in  $O(\theta n^7 \log n)$ time, where $\theta$ is the time to compute
the submodular function $f$.
When the input graph is stored by a collection of adjacency lists,
$f(\cdot)$ can be computed in $O(m)$ time in a straightforward way
(assuming the input graph contains no isolated vertex).
The proof of the theorem is finished.
\qed
\end{proof}

\subsection{Approximation Algorithms for MHV}
\label{subsec - Approxalgs for MHV}
The approximation algorithms for MHV work based on the types defined for
vertices, as shown in Definition \ref{def - types of vertices in MHV}.

\begin{definition}[Types of vertices in MHV]
\label{def - types of vertices in MHV}
Fix a (partial or total) vertex coloring. Let $v$ be a vertex. Then,
\begin{enumerate}
\item $v$ is an {\em $H$-vertex} if $v$ is colored and happy (i.e., $|N^s(v)| = \deg(v)$);
\item $v$ is a {\em $U$-vertex} if $v$ is colored and destined to be unhappy
(i.e., $|N^d(v)| > 0$);
\item $v$ is a {\em $P$-vertex} if
\begin{enumerate}
    \item $v$ is colored,
    \item $v$ has not been happy (i.e., $|N^s(v)| < \deg(v)$), and
    \item $v$ may become happy in the future (i.e., $|N^d(v)| = 0$);
\end{enumerate}
\item $v$ is an {\em $L$-vertex} if $v$ has not been colored.
\end{enumerate}
\end{definition}

See Figures \ref{fig - process a P-vertex}, \ref{fig - process a Lh-vertex},
\ref{fig - process a Lu-vertex} for examples of the vertex types.
Note that by a type name we also mean the set of vertices of that type.
Conversely, by a set name we also mean that each element in the set
is of that type. For example, $H$ is the set of all $H$-vertices;
each vertex in the set $H$ is an $H$-vertex.

\subsubsection{Greedy Approximation Algorithm for MHV}
\label{subsec - greedy approxalg for MHV}
{\bf Algorithm {\sc Greedy-MHV}.}
The approximation algorithm {\sc Greedy-MHV} for MHV is quiet simple.
We just color all uncolored vertices by the same color.
Since there are $k$ colors in $C$, we can obtain $k$ vertex colorings for
graph $G$. Finally we output the coloring that has the most number of
happy vertices.

\begin{theorem}
\label{th - 1/k-approximation for MHV}
Algorithm {\sc Greedy-MHV} is a $1/k$-approximation algorithm for the MHV
problem, where $k$ is the number of colors given in the input.
\end{theorem}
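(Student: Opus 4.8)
The plan is to lower-bound $SOL$ against $OPT$ by a pigeonhole argument over the $k$ colors. First I would fix an optimal total coloring $c^*$ extending the partial coloring $c$, and let $H^*$ denote its set of happy vertices, so that $|H^*| = OPT$. The key structural observation is that the happy vertices of $c^*$ can be grouped by their color: for each color $i \in C$, let $H^*_i = \{ v \in H^* : c^*(v) = i \}$. Since every vertex is colored in $c^*$ and each happy vertex lies in exactly one class, these sets are disjoint and $\sum_{i=1}^{k} |H^*_i| = |H^*| = OPT$.

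Next I would analyze the $i$-th candidate coloring produced by {\sc Greedy-MHV}, namely the total coloring $c'_i$ that keeps every pre-colored vertex at its color $c(\cdot)$ and assigns color $i$ to every uncolored vertex. The claim I would establish is that every vertex in $H^*_i$ is happy under $c'_i$. Indeed, take any $v \in H^*_i$. Because $v$ is happy in $c^*$, all its neighbors share its color, so $c^*(u) = i$ for every $u \in N(v)$, and also $c^*(v) = i$. Now I would split each such vertex into two cases: a pre-colored vertex keeps its original color under $c'_i$, and since $c^*$ extends $c$ this original color must equal $i$; an uncolored vertex receives color $i$ by definition of $c'_i$. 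In either case $v$ and all of its neighbors receive color $i$ under $c'_i$, so $v$ is happy in $c'_i$. Hence the coloring $c'_i$ has at least $|H^*_i|$ happy vertices.

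Finally I would combine these bounds. Since {\sc Greedy-MHV} outputs the best of the $k$ colorings $c'_1, \dots, c'_k$, we have $SOL \geq \max_{1 \le i \le k} |H^*_i|$, and by averaging the maximum is at least the mean, giving
\[
SOL \;\geq\; \max_{1 \le i \le k} |H^*_i| \;\geq\; \frac{1}{k}\sum_{i=1}^{k} |H^*_i| \;=\; \frac{1}{k}\, OPT ,
\]
which is exactly the desired ratio.

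I do not expect a serious obstacle here; the only point requiring care is the consistency check in the case analysis, where one must invoke that $c^*$ is an \emph{extension} of $c$ to guarantee that a pre-colored neighbor of a vertex in $H^*_i$ already carries color $i$ (so that the greedy coloring $c'_i$, which preserves pre-colored vertices, agrees with $c^*$ on the entire closed neighborhood of $v$). Once this is noted, the disjointness of the classes $H^*_i$ and the pigeonhole step are routine.
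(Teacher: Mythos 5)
Your proof is correct, but it takes a genuinely different route from the paper's. The paper argues solution-independently: it partitions $V(G)$ by vertex type with respect to the partial coloring $c$ into $(H, P, U, L_P, L_U)$, upper-bounds $OPT \leq |H| + |P| + |L_P|$ (vertices in $U$ and $L_U$ can never be happy), and then asserts that the $k$ candidate colorings collectively satisfy $|H| + |P| + |L_P| \leq \sum_i SOL_i$ --- each $H$-vertex stays happy in every candidate, each $P$-vertex becomes happy in the candidate using its own color, and each $L_P$-vertex becomes happy in the candidate using the unique color of its colored neighbors --- before applying the same averaging step you use. You instead fix an optimal extension $c^*$, partition its happy set $H^*$ by color into $H^*_1, \dots, H^*_k$, and show that candidate $c'_i$ preserves every vertex of $H^*_i$ (using that $c^*$ extends $c$, so precolored vertices in the closed neighborhood already carry color $i$), giving $\sum_i SOL_i \geq OPT$ directly. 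Your version is more self-contained and arguably tighter in its bookkeeping: it needs no type machinery, and every inequality is fully justified, whereas the paper's middle inequality is stated without the per-type verification (and its bound $|H|+|P|+|L_P| \geq OPT$ contains a typo, writing $S$ for $H$). What the paper's formulation buys is an explicit combinatorial upper bound on $OPT$ phrased in the type vocabulary it develops for {\sc Growth-MHV}, which it then reuses almost verbatim for the SoftMHV and HardMHV variants in the appendix by simply redefining $L_P$; your argument also adapts to those variants, but would have to redo the ``preserved happiness'' claim under the relaxed happiness thresholds rather than just swapping one set definition.
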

\begin{proof}
Let the partial function $c$ be the vertex coloring used in Definition
\ref{def - types of vertices in MHV}.
We partition $L$-vertices further into two subsets $L_P$ and $L_U$.
$L_P$ is the set of uncolored vertices that can become happy (i.e., whose
neighbors have at most one color). $L_U$ is the set of uncolored vertices
that are destined to be unhappy (i.e., whose neighbors already have
at least two distinct colors). Then $(H, P, U, L_P, L_U)$ is a partition of
$V(G)$. Obviously, in the best case $OPT$ can make all vertices in $S$, $P$
and $L_P$ happy, implying $|H| + |P| + |L_P| \geq OPT$.

Let $SOL_i$ be the number of happy vertices when Algorithm {\sc Greedy-MHV}
colors all uncolored vertices by color $i$.
Then we have $|H| + |P| + |L_P| \leq \sum_i SOL_i$.
By the greedy strategy, $SOL$, which is the number of happy vertices found
by {\sc Greedy-MHV}, is at least $\frac1k (|H| + |P| + |L_P|)$.
The theorem follows by observing that {\sc Greedy-MHV} obviously runs in
polynomial time.
\qed
\end{proof}

%

\subsubsection{Subset-Growth Approximation Algorithm for MHV}
\label{subsec - subset-growth approxalg for MHV}
The subset-growth algorithm starts with the partial vertex coloring
$(V_1, V_2, \cdots, V_k)$ defined by the partial function $c$.
From a high level point of view, the algorithm iteratively augments
the subsets in $(V_1, V_2, \cdots, V_k)$ by satisfying the vertices that
can become happy easily at the current time, until $(V_1, V_2, \cdots, V_k)$
becomes a partition of $V(G)$ and thus a vertex coloring is obtained.
This strategy is based on the following further classification of
$L$-vertices, according to the type of their neighbors.
Recall that by Definition \ref{def - types of vertices in MHV},
$L$-vertex means uncolored vertex.

\begin{definition}[Subtypes of $L$-vertex in MHV]
Let $v$ be an $L$-vertex in a vertex coloring. Then,
\begin{enumerate}
\item $v$ is an {\em $L_p$-vertex} if $v$ is adjacent to a $P$-vertex;
\item $v$ is an {\em $L_{h}$-vertex} if
\begin{enumerate}
    \item $v$ is not adjacent to any $P$-vertex,
    \item $v$ can become happy, that is, $v$ is adjacent to $U$-vertices with only
    one color;
\end{enumerate}
\item $v$ is an {\em $L_{u}$-vertex} if
\begin{enumerate}
    \item $v$ is not adjacent to any $P$-vertex,
    \item $v$ is destined to be unhappy, that is, $v$ is adjacent to $U$-vertices
    with more than one colors;
\end{enumerate}
\item $v$ is an {\em $L_f$-vertex} if $v$ is not adjacent to any colored vertex.
\end{enumerate}
\end{definition}

See Figures \ref{fig - process a P-vertex}, \ref{fig - process a Lh-vertex},
\ref{fig - process a Lu-vertex} for examples of the subtypes of $L$-vertex.

The subset-growth algorithm {\sc Growth-MHV} is as follows.

\setcounter{algleo}{0}
\begin{algleo}
\linonumber {\bf Algorithm} {\sc Growth-MHV}
\linonumber {\em Input:} A connected undirected graph $G$ and a partial
    coloring function $c$.
\linonumber {\em Output:} A total vertex coloring for $G$.

\li $\forall 1 \leq i \leq k$, $V_i \leftarrow \{v \colon c(v) = i\}$.
\li \label{step - Growth-MHV - beginning of the main loop}
    {\bf while} there exist $L$-vertices {\bf do}
\begin{algleo}
    \li \label{step - Growth-MHV - process an P-vertex}
        {\bf if} there exists a $P$-vertex $v$ {\bf then}
        \begin{algleo}
            \li $i \leftarrow c(v)$.
            \li Add all the $L_p$-neighbors of $v$ to $V_i$.
                The types of all affected vertices (including $v$
                and vertices in $N^2(v)$) are changed accordingly.
        \end{algleo}
    \li \label{step - Growth-MHV - process an Lh-vertex}
        {\bf elseif} there exists an $L_{h}$-vertex $v$ {\bf then}
        \begin{algleo}
            \li Let $u$ be any $U$-vertex adjacent to $v$,
                then $i \leftarrow c(u)$.
            \li Add $v$ and all its $L$-neighbors to $V_i$.
                The types of all affected vertices
                (including $v$ and vertices in $N^2(v)$) are changed accordingly.
        \end{algleo}
    \li \label{step - Growth-MHV - process an Lu-vertex}
        {\bf else}
        \begin{algleo}
            \linonumber {\em Comment:} There must be an $L_{u}$-vertex.
            \li Let $v$ be any $L_{u}$-vertex, $u$ be the any $U$-vertex
                adjacent to $v$, then $i \leftarrow c(u)$.
            \li Add $v$ to $V_i$.
                The types of all affected vertices
                (including $v$ and vertices in $N(v)$) are changed accordingly.
        \end{algleo}
    \li {\bf endif}
\end{algleo}
\li {\bf endwhile}
\li {\bf return} the vertex coloring $(V_1, V_2, \cdots, V_k)$.
\end{algleo}

When there are still $L$-vertices (i.e., uncolored vertices), Algorithm
{\sc Growth-MHV} works in the following way. It first colors a $P$-vertex's
neighbors to make this $P$-vertex happy
(see Figure \ref{fig - process a P-vertex}).
When there is no any $P$-vertex, it colors an $L_{h}$-vertex and its neighbors
to make the $L_{h}$-vertex happy (see Figure \ref{fig - process a Lh-vertex}).
When there is no any $P$-vertex or $L_{h}$-vertex, it colors an $L_{u}$-vertex
by the color of its any $U$-vertex neighbor
(see Figure \ref{fig - process a Lu-vertex}).
Note that coloring a vertex may generate new $P$-vertices, or $L_{h}$-vertices,
or $L_{u}$-vertices.

\begin{figure}
\begin{center}
\includegraphics*[width=0.55\textwidth]{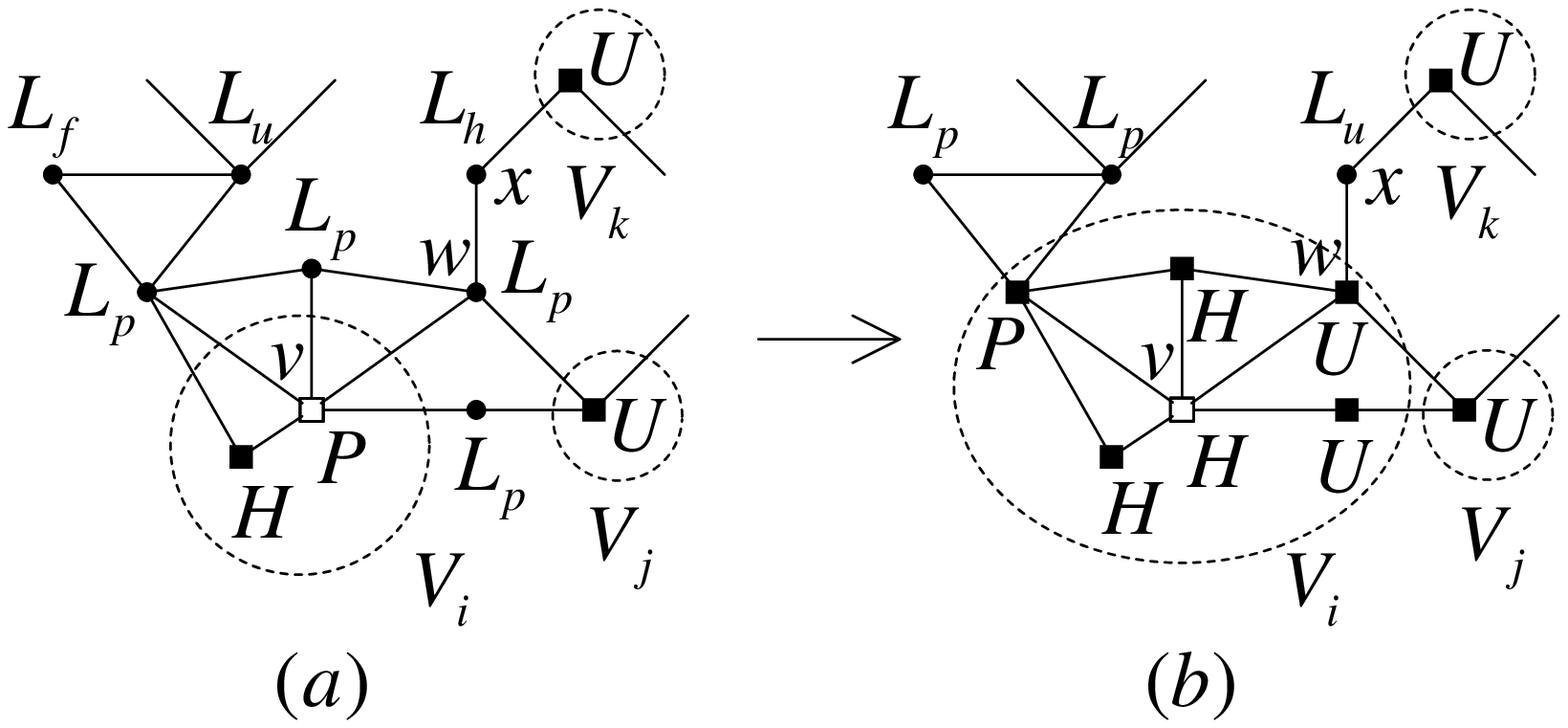}
\end{center}
\caption{Process a $P$-vertex. The hollow vertex $v$ in graph (a) is
the $P$-vertex to be processed. The square vertices mean colored vertices,
while the round vertices mean uncolored vertices.}
\label{fig - process a P-vertex}
\end{figure}

\begin{figure}
\begin{center}
\includegraphics*[width=0.6\textwidth]{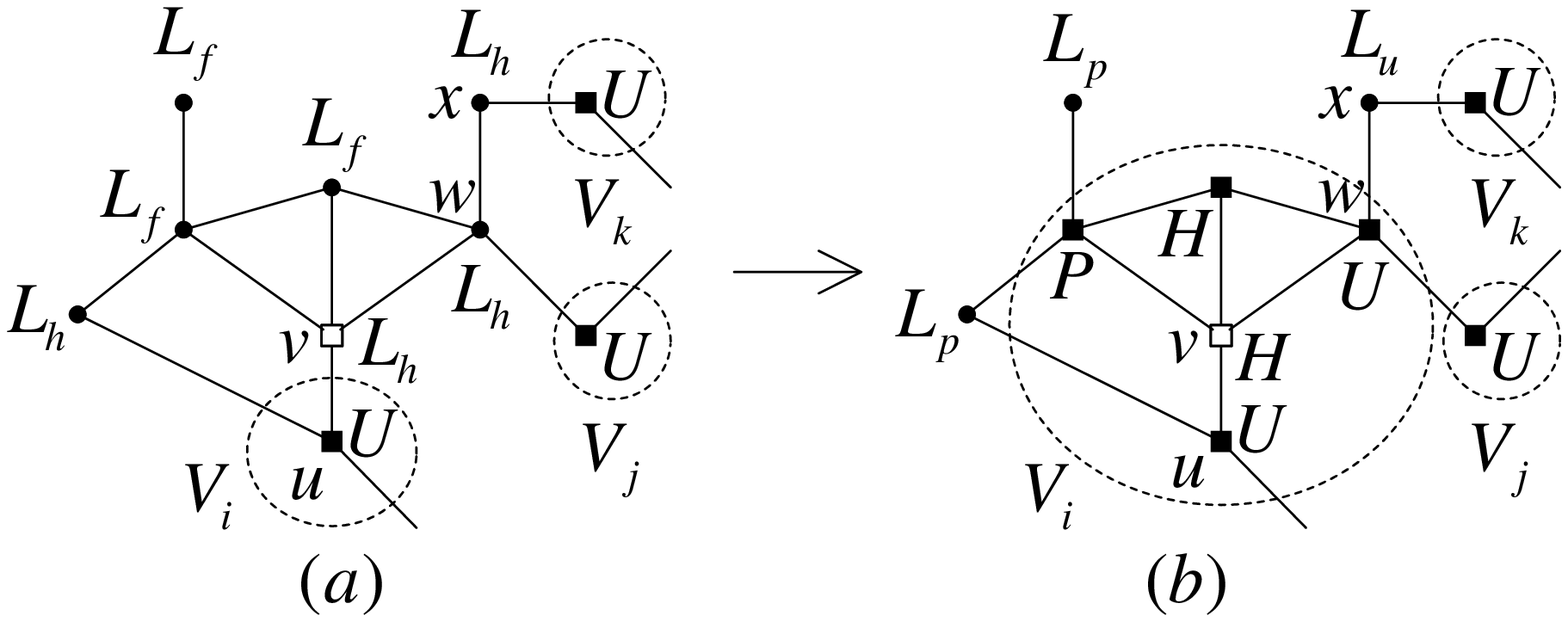}
\end{center}
\caption{Process an $L_{h}$-vertex. The hollow vertex $v$ in graph (a) is
the $L_{h}$-vertex to be processed. Note that when an $L_{h}$-vertex is
to be processed, there is no $P$-vertex in the current graph (a).}
\label{fig - process a Lh-vertex}
\end{figure}

\begin{figure}
\begin{center}
\includegraphics*[width=0.55\textwidth]{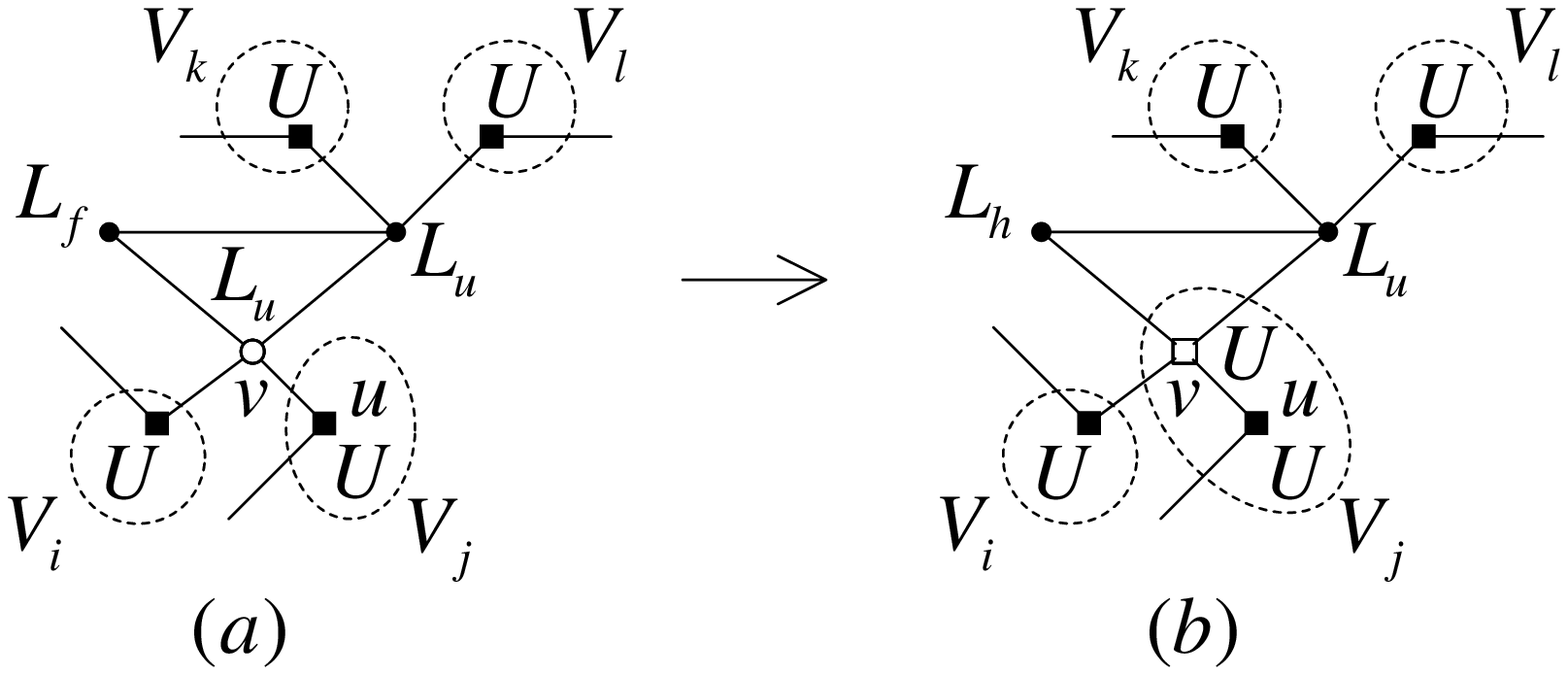}
\end{center}
\caption{Process an $L_{u}$-vertex. The hollow vertex $v$ in graph (a) is
the $L_{u}$-vertex to be processed. Note that when an $L_{u}$-vertex is
to be processed, there is no any $P$-vertex or $L_{h}$-vertex in the current
graph (a).}
\label{fig - process a Lu-vertex}
\end{figure}

When there exist $L$-vertices, it is impossible that there are only
$L_f$-vertices but no any $L_p$-vertex, $L_{h}$-vertex or $L_{u}$-vertex,
since by assumption $G$ is a connected graph and by definition
$L_f$-vertex is not adjacent to any colored vertex.
So, when there isn't any $L_p$-vertex or $L_{h}$-vertex, there must be
at least one $L_{u}$-vertex.
As a result, in step \ref{step - Growth-MHV - process an Lu-vertex}
we don't need an {\bf if} statement like that in
steps \ref{step - Growth-MHV - process an P-vertex} and
\ref{step - Growth-MHV - process an Lh-vertex}.

We use a type name with the superscript ``org'' (means ``original'') to
denote the set of vertices of that type which is determined by
the partial function $c$, and a type name with the superscript ``new'' to
denote the set of vertices of that type which is determined in the execution
of Algorithm {\sc Growth-MHV}. For example, $H^{org}$ is the set of $H$-vertices
that are determined by the partial function $c$, and $H^{new}$ is the set of
$H$-vertices that are newly generated by Algorithm {\sc Growth-MHV}.

Let $\Delta$ be the maximum degree of vertices in the input graph.
We first bound the number of $L_{u}^{new}$-vertices.

\begin{lemma}
\label{lm - |L_u^new| <= Delta (Delta-2) |H^new|}
$|L_{u}^{new}| \leq \Delta (\Delta-2) |H^{new}|$.
\end{lemma}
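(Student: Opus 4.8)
The plan is to prove the bound by a charging argument that assigns each newly created $L_u$-vertex to a newly created happy vertex, carried out on a per-iteration basis of Algorithm {\sc Growth-MHV}. First I would establish the structural fact that new $L_u$-vertices can only be produced in the iterations that process a $P$-vertex (step~\ref{step - Growth-MHV - process an P-vertex}) or an $L_h$-vertex (step~\ref{step - Growth-MHV - process an Lh-vertex}), and that each such iteration creates at least one new $H$-vertex, namely the processed vertex $v$ itself. Indeed, when a $P$-vertex $v$ of color $i$ is processed, all its uncolored neighbours are $L_p$-vertices and get colored $i$, so $v$ ends up with all neighbours of color $i$ and becomes happy; the argument for an $L_h$-vertex is analogous, using that an $L_h$-vertex is adjacent only to $U$-vertices of a single color $i$ and, being uncolored, cannot be adjacent to any $H$-vertex. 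The idea is then to charge every $L_u^{new}$-vertex born in a given iteration to the $H^{new}$-vertex born in that same iteration, and to bound the number of such $L_u^{new}$-vertices by $\Delta(\Delta-2)$.

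The second step rules out the remaining case. When step~\ref{step - Growth-MHV - process an Lu-vertex} processes an $L_u$-vertex, there is, by the guard of the \textbf{while} loop, no $P$-vertex and no $L_h$-vertex present, so every uncolored vertex is either an $L_u$-vertex or an $L_f$-vertex. Coloring the chosen $L_u$-vertex $v$ turns it into a $U$-vertex and can only change the types of its uncolored neighbours; such a neighbour is either already an $L_u$-vertex (hence not new) or an $L_f$-vertex, which acquires a single colored $U$-neighbour and so becomes an $L_h$-vertex, never an $L_u$-vertex. Thus processing an $L_u$-vertex produces no new $L_u$-vertex, confirming that all $L_u^{new}$-vertices arise in $P$- or $L_h$-iterations.

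For the per-iteration count I would argue locally around the new $H$-vertex $v$. In either relevant iteration only neighbours of $v$ are freshly colored, all with the common color $i$, and a vertex $w$ can turn into a new $L_u$-vertex only by gaining a freshly colored $U$-neighbour $u \in N(v)$ while already having, or simultaneously acquiring, a $U$-neighbour of a different color. Each such intermediate vertex $u$ is a $U$-vertex of color $i$, so it has $v$ among its neighbours (color $i$) and at least one further neighbour of color different from $i$, this being exactly what makes $u$ a $U$-vertex; both of these neighbours are colored and hence cannot be the uncolored vertex $w$. Consequently $u$ has at most $\deg(u)-2 \le \Delta-2$ neighbours eligible to become new $L_u$-vertices, and since there are at most $\deg(v) \le \Delta$ candidate vertices $u \in N(v)$, at most $\Delta(\Delta-2)$ new $L_u$-vertices are created in the iteration. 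Summing over all iterations and using that a happy vertex, once created, stays happy (it has no uncolored neighbour, so no later coloring can spoil it), the new $H$-vertices of distinct iterations are distinct, which yields $|L_u^{new}| \le \Delta(\Delta-2)\,|H^{new}|$.

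The step I expect to be the main obstacle is making the locality claim fully rigorous, that is, verifying that every new $L_u$-vertex really is adjacent to a freshly colored $U$-neighbour lying in $N(v)$. The delicate case is that coloring the $L_p$-neighbours of $v$ may flip a nearby $P$-vertex $v'$ of a color different from $i$ into a $U$-vertex, and this reclassification could in turn strand a former $L_p$-neighbour of $v'$ as an $L_u$-vertex at distance three from $v$, outside $N^2(v)$. I would need to check either that such a stranded vertex is still adjacent to a freshly colored $U$-vertex, so that the counting above still applies, or to charge it separately while keeping the total charge to each $H^{new}$-vertex within $\Delta(\Delta-2)$; controlling this propagation is the technical crux of the lemma.
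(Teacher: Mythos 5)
You follow exactly the paper's route: a case analysis over the three branches of \textsc{Growth-MHV}, the observation that processing an $L_u$-vertex creates no new $L_u$-vertex (no $P$- or $L_h$-vertex being present at that moment), a per-iteration count of new $L_u$-vertices for the $P$- and $L_h$-branches, and a charge of these to the $H$-vertex created in the same iteration. Your local count (at most $\Delta$ freshly colored $U$-vertices $u \in N(v)$, each with at most $\Delta-2$ uncolored neighbors that can flip from $L_h$ to $L_u$) is the paper's count with its $w, x$ renamed; the paper even sharpens the $L_h$-branch to $(\Delta-1)(\Delta-2)$ as you implicitly could. However, your proof is incomplete as written, and you say so yourself: the locality claim --- that every new $L_u$-vertex is adjacent to a freshly colored $U$-vertex in $N(v)$ --- is left unverified, and without it the per-iteration bound does not follow.

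That flagged step is a genuine gap, not a removable technicality, and it is instructive that the paper's own proof does not close it either: the paper simply asserts that a new $L_u$-vertex can arise only as an $L_h$-neighbor of a freshly colored vertex, which is precisely the unproven claim you isolate. The cascade you worry about can occur and can dominate the count. Let $v$ be a $P$-vertex of color $i$ each of whose $\Delta$ uncolored neighbors $w$ is adjacent to $\Delta-1$ further $P$-vertices $v'$ with colors different from $i$, where each $v'$ has $\Delta-1$ pendant uncolored neighbors $y$, each $y$ also having one $U$-neighbor of a color different from $c(v')$. Processing $v$ colors the $w$'s with $i$, turns every $w$ and every $v'$ into a $U$-vertex, and strands all $\Delta(\Delta-1)^2 = \Theta(\Delta^3)$ vertices $y$ as new $L_u$-vertices (each loses its only $P$-neighbor and now sees $U$-neighbors of two colors), while only $v$ becomes happy; the run can then terminate with no further $H$-vertices, since all remaining uncolored vertices are $L_u$. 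Thus no charging scheme recovers the factor $\Delta(\Delta-2)$ for this execution, and under this (permitted) choice of which $P$-vertex to process the stated inequality itself fails. What is salvageable is a weaker bound: flips $P \to U$ happen only at $P$-vertices adjacent to freshly colored vertices, and flips do not cascade (they change no colors), so each iteration that creates an $H$-vertex creates at most $\Delta(\Delta-2) + \Delta(\Delta-1)^2 = O(\Delta^3)$ new $L_u$-vertices, giving $|L_u^{new}| \leq O(\Delta^3)\,|H^{new}|$ and, propagated through the paper's remaining lemmas, an $\Omega(\Delta^{-4})$ rather than $\Omega(\Delta^{-3})$ approximation. In short: you reproduce the paper's argument and correctly identify the exact point at which it is unrigorous, but neither your attempt nor the paper proves the lemma as stated.
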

\begin{proof}
Algorithm {\sc Growth-MHV} iteratively processes three types of vertices,
that is, the $P$-vertices, the $L_{h}$-vertices and the $L_{u}$-vertices.
We will prove the lemma by proving the following three points:
(1) When Algorithm {\sc Growth-MHV} processes a $P$-vertex, at most
$\Delta (\Delta-2)$ $L_{u}^{new}$-vertices are generated,
(2) When Algorithm {\sc Growth-MHV} processes an $L_{h}$-vertex, at most
$(\Delta-1)(\Delta-2)$ $L_{u}^{new}$-vertices are generated, and
(3) When Algorithm {\sc Growth-MHV} processes an $L_{u}$-vertex,
no $L_{u}^{new}$-vertex is generated.

Consider the first point. Let $v$ be a $P$-vertex to be processed.
Suppose $v$ has an $L_p$-neighbor $w$, which is adjacent to a $U$-vertex.
Only if there is an $L_{h}$-vertex $x$ which is the neighbor of $w$,
$x$ will become a newly generated $L_{u}$-vertex when the $P$-vertex $v$
is processed. See Figure \ref{fig - process a P-vertex} for an example.
Since the maximum vertex degree is $\Delta$, $v$ has at most $\Delta$
$L_p$-neighbors, and $w$ has at most $\Delta - 2$ $L_{h}$-neighbors.
This implies that when $v$ is processed, at most $\Delta (\Delta-2)$
$L_{u}^{new}$-vertices can be generated.

Then consider the second point. Suppose the $L_{h}$-vertex to be processed
is $v$. Suppose $v$ has an $L$-neighbor $w$ ($w$ can be an $L_{h}$-vertex or
an $L_{u}$-vertex), which is adjacent to a $U$-vertex. Similarly,
only if there is an $L_{h}$-vertex $x$ which is the neighbor of $w$,
$x$ will become a newly generated $L_{u}$-vertex when the $L_{h}$-vertex $v$
is processed. See Figure \ref{fig - process a Lh-vertex} for an example.
Since the maximum vertex degree is $\Delta$, $v$ has at most $\Delta - 1$
$L$-neighbors, and $w$ has at most $\Delta - 2$ $L_{h}$-neighbors.
This implies that when $v$ is processed, at most $(\Delta-1)(\Delta-2)$
$L_{u}^{new}$-vertices can be generated.

Finally consider the third point. When Algorithm {\sc Growth-MHV} processes
an $L_{u}$-vertex, there is no any $L_{h}$-vertex (or $P$-vertex) in the
current graph. So, adding an $L_{u}$-vertex to some subset $V_i$
does not generate any new $L_{u}$-vertex.
See Figure \ref{fig - process a Lu-vertex} for an example.

When Algorithm {\sc Growth-MHV} processes a $P$-vertex or an $L_{h}$-vertex,
at least one vertex becomes an $H$-vertex. So we can charge the number of
newly generated $L_{u}$-vertices to this newly generated $H$-vertex.
This finishes the proof of the lemma.
\qed
\end{proof}

The following Lemma \ref{lm - upper bound on OPT} gives an upper bound
on $OPT$, the number of happy vertices in an optimal solution to
the $k$-MHV problem.

\begin{lemma}
\label{lm - upper bound on OPT}
$OPT \leq |H^{org}| + (\Delta + 1)(|L^{org}| - |L_{u}^{org}|)$.
\end{lemma}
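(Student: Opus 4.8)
The plan is to bound $OPT$ by counting exactly which vertices are even eligible to be happy in \emph{any} total coloring $c'$ that extends the partial coloring $c$, and then to control the one term in that count that is not already accounted for by the right-hand side, namely the number of $P$-vertices. First I would observe that two of the original types can never contribute a happy vertex, no matter how $c$ is extended. A $U^{org}$-vertex $v$ is colored and already has a neighbor of a different color under $c$; since $c'$ agrees with $c$ on all colored vertices, $v$ remains unhappy in $c'$. An $L_u^{org}$-vertex $v$ is uncolored but adjacent to two colored ($U$-)vertices bearing distinct colors, so whatever color $c'$ assigns to $v$, it disagrees with at least one of them, and $v$ is unhappy in $c'$. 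Hence every happy vertex of any extension, in particular of $OPT$, lies in $V \setminus (U^{org} \cup L_u^{org})$. Since the original types partition $V$ as $H^{org} \cup U^{org} \cup P^{org} \cup L_p^{org} \cup L_h^{org} \cup L_u^{org} \cup L_f^{org}$ (here using that an $H$-vertex has no uncolored neighbor, so every uncolored vertex's colored neighbors are $P$- or $U$-vertices only), this immediately yields the crude bound $OPT \leq |H^{org}| + |P^{org}| + |L_p^{org}| + |L_h^{org}| + |L_f^{org}|$.

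The remaining task, and the step I expect to be the crux, is to absorb the $|P^{org}|$ term into the $(\Delta+1)$-factor on $(|L^{org}| - |L_u^{org}|) = |L_p^{org}| + |L_h^{org}| + |L_f^{org}|$. The key structural fact is that every $P$-vertex has at least one uncolored neighbor: a $P$-vertex $v$ satisfies $|N^d(v)| = 0$ and $|N^s(v)| < \deg(v)$, forcing $|N^u(v)| \geq 1$, and any such uncolored neighbor is adjacent to the $P$-vertex $v$ and is therefore an $L_p$-vertex. I would then run a double-counting argument on the incidences between $P^{org}$ and $L_p^{org}$ across the edges joining them. From the $P$-side, each $P$-vertex contributes at least one such incidence, so the number of incidences is at least $|P^{org}|$. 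From the $L_p$-side, each $L_p$-vertex has degree at most $\Delta$ and hence at most $\Delta$ $P$-neighbors, so the number of incidences is at most $\Delta |L_p^{org}|$. Combining gives $|P^{org}| \leq \Delta |L_p^{org}|$.

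Substituting this into the crude bound finishes the argument: $OPT \leq |H^{org}| + \Delta |L_p^{org}| + |L_p^{org}| + |L_h^{org}| + |L_f^{org}| = |H^{org}| + (\Delta+1)|L_p^{org}| + |L_h^{org}| + |L_f^{org}|$, which is at most $|H^{org}| + (\Delta+1)(|L_p^{org}| + |L_h^{org}| + |L_f^{org}|) = |H^{org}| + (\Delta+1)(|L^{org}| - |L_u^{org}|)$. The main obstacle is pinning down the $P$-vertex count: the two ``always-unhappy'' observations and the final substitution are routine, but the $(\Delta+1)$ factor is only explained by the incidence argument above, and one must verify carefully both that every $P$-vertex indeed carries an $L_p$-neighbor and that the degree cap $\Delta$ gives the correct charging from the $L_p$ side. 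I would double-check that the seven original types genuinely partition $V$ (in particular that no uncolored vertex is adjacent to an $H$-vertex) before relying on it, since the crude bound depends on that partition.
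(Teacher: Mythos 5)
Your proof is correct and is essentially the paper's own argument: the same crude bound $OPT \leq |H^{org}| + |P^{org}| + |L^{org}| - |L_u^{org}|$ from the observation that $U^{org}$- and $L_u^{org}$-vertices can never be happy, the same counting step $|P^{org}| \leq \Delta\,|L_p^{org}|$ (each $P$-vertex has an $L_p$-neighbor, each $L_p$-vertex has at most $\Delta$ $P$-neighbors), and the same final substitution using $|L_p^{org}| \leq |L^{org}| - |L_u^{org}|$. You merely make explicit two points the paper leaves implicit --- the double-counting of $P$--$L_p$ incidences and the verification that the subtypes partition $L$ (since an $H$-vertex has no uncolored neighbor) --- which is sound care, not a different approach.
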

\begin{proof}
By the partial function $c$, all vertices in the original graph
(i.e., the input graph that has not been colored by Algorithm {\sc Growth-MHV})
are partitioned into four vertex subsets $H^{org}$, $P^{org}$, $U^{org}$
and $L^{org}$. Subset $L^{org}$ is further partitioned into four subsets
$L_p^{org}$, $L_{h}^{org}$, $L_{u}^{org}$ and $L_f^{org}$.
By definition, all vertices in $U^{org}$ are unhappy.
And, all vertices in $L_{u}^{org}$ are destined to be unhappy
since each of them is adjacent to at least two vertices with different
colors. So, in the best case all vertices in $P^{org}$ and $L^{org}$
except those in $L_{u}^{org}$ would be happy.
Noticing that the vertices in $H^{org}$ are already happy, we have
\begin{equation}
OPT \leq |H^{org}| + |P^{org}| + |L^{org}| - |L_{u}^{org}|. \nonumber
\end{equation}

Since each $P$-vertex must be adjacent to some $L_p$-vertex, and each
$L_p$-vertex can be adjacent to at most $\Delta$ $P$-vertices, the number
of $P^{org}$-vertices is at most $\Delta |L_p^{org}|$.
Since $|L_p^{org}| \leq |L^{org}| - |L_{u}^{org}|$, we get that
\begin{eqnarray}
OPT &\leq& |H^{org}| + \Delta |L_p^{org}| + |L^{org}| - |L_{u}^{org}|
\nonumber \\
&\leq& |H^{org}| + (\Delta + 1)(|L^{org}| - |L_{u}^{org}|),
\nonumber
\end{eqnarray}
concluding the lemma.
\qed
\end{proof}

\begin{lemma}
\label{lm - lower bound on |H^new|}
$|H^{new}| \geq \frac{1}{\Delta(\Delta-1)}(|L^{org}| - |L_{u}^{org}|)$.
\end{lemma}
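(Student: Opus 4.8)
The plan is to establish the logically equivalent inequality $|L^{org}| - |L_{u}^{org}| \le \Delta(\Delta-1)\,|H^{new}|$ by a charging argument over the colouring events of Algorithm {\sc Growth-MHV}. First I would record that, because $L^{org}$ splits into $L_p^{org}, L_{h}^{org}, L_{u}^{org}, L_f^{org}$, the left-hand side equals $|L_p^{org}| + |L_{h}^{org}| + |L_f^{org}|$, i.e. the number of originally uncoloured vertices that are not destined (under $c$) to be unhappy; call this set $G$. The crucial observation is that {\sc Growth-MHV} colours every vertex of $G$ exactly once, so I can classify each $w \in G$ by the \emph{kind} of step that colours it: a $P$-step (step \ref{step - Growth-MHV - process an P-vertex}), an $L_h$-step (step \ref{step - Growth-MHV - process an Lh-vertex}), or an $L_u$-step (step \ref{step - Growth-MHV - process an Lu-vertex}). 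Since each $w$ is coloured once, this is a genuine partition of $G$ with no double counting across the three buckets.

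Next I would bound each bucket. For the $L_u$-bucket: a vertex $w \in G$ coloured during an $L_u$-step is, at that instant, destined to be unhappy, yet under the partial function $c$ it was not (it lay in $L_p^{org}\cup L_{h}^{org}\cup L_f^{org}$); hence $w$ must have turned into a newly generated $L_u$-vertex, so this bucket is a subset of $L_{u}^{new}$ and has size at most $\Delta(\Delta-2)|H^{new}|$ by Lemma \ref{lm - |L_u^new| <= Delta (Delta-2) |H^new|}. For the $P$- and $L_h$-buckets: each $P$-step colours only the $L_p$-neighbours of the processed $P$-vertex (at most $\Delta$ of them), and each $L_h$-step colours the processed $L_h$-vertex together with its uncoloured neighbours (at most $\Delta$ vertices in all), so every such step contributes at most $\Delta$ vertices of $G$. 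Moreover each $P$-step and each $L_h$-step makes its distinguished vertex happy; since a vertex becomes happy at most once and is selected as a distinguished $P$- or $L_h$-vertex at most once, these distinguished happy vertices are distinct elements of $H^{new}$. Therefore the total number of $P$- and $L_h$-steps is at most $|H^{new}|$, and together they colour at most $\Delta\,|H^{new}|$ vertices of $G$.

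Summing the two contributions yields $|G| \le \Delta\,|H^{new}| + \Delta(\Delta-2)\,|H^{new}| = \Delta(\Delta-1)\,|H^{new}|$, which is precisely the claim after dividing by $\Delta(\Delta-1)$. I expect the main obstacle to be the bookkeeping behind the $L_u$-bucket: one must argue cleanly that the classification of $G$-vertices by colouring event is a true partition (so a vertex coloured on an $L_h$-step, even if it had already become an $L_u^{new}$-vertex, is counted only in the $L_h$-bucket), and that every $G$-vertex coloured on an $L_u$-step genuinely belongs to $L_{u}^{new}$ rather than $L_{u}^{org}$ so that Lemma \ref{lm - |L_u^new| <= Delta (Delta-2) |H^new|} applies. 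The remaining estimates — the per-step colouring bounds and the injection of distinguished happy vertices into $H^{new}$ — are routine consequences of the case analysis already used to prove that lemma.
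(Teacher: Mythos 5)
Your proof is correct and is essentially the paper's own argument: your bound of at most $\Delta$ colored vertices per $P$- or $L_{h}$-step, charged injectively to the distinguished new happy vertex, is exactly the paper's inequality $|H^{new}| \geq \frac{1}{\Delta}\bigl(|L^{org}| - |L_{u}^{org}| - |L_{u}^{new}|\bigr)$, and your $L_u$-bucket is handled by the same appeal to Lemma \ref{lm - |L_u^new| <= Delta (Delta-2) |H^new|} followed by the identical algebra $\Delta + \Delta(\Delta-2) = \Delta(\Delta-1)$. The only difference is presentational: you make explicit, as a partition of $L^{org} \setminus L_{u}^{org}$ according to the coloring event, what the paper states more tersely as a consumption bound.
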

\begin{proof}
Recall that there are four subtypes of an $L$-vertex, i.e., $L_p$-vertex,
$L_{h}$-vertex, $L_{u}$-vertex and $L_f$-vertex.
Among them only $L_p$-vertex and $L_{h}$-vertex will (directly) contribute
to generating $H$-vertices.
For an $L_f$-vertex, it will ultimately become one of the other three
types of $L$-vertex.
For each $L_{u}$-vertex, although it may become an $L_p$-vertex and hence
can contribute to generating $H$-vertices, in the worst case we may
assume that it is added to some subset $V_i$
and contribute nothing to the generation of $H$-vertex.

By step \ref{step - Growth-MHV - process an P-vertex} and
step \ref{step - Growth-MHV - process an Lu-vertex},
each time an $H$-vertex is generated, at most $\Delta$ $L_p$-vertices or
$L_{h}$-vertices are consumed (i.e., colored). Furthermore, once
an $L$-vertex is colored, it will never be re-colored or de-colored.
So we have
\begin{equation}
|H^{new}| \geq \frac{1}{\Delta}(|L^{org}| - |L_{u}^{org}| - |L_{u}^{new}|).
\nonumber
\end{equation}

By Lemma \ref{lm - |L_u^new| <= Delta (Delta-2) |H^new|},
we have
\begin{eqnarray}
\frac{1}{\Delta}(|L^{org}| - |L_{u}^{org}| - |L_{u}^{new}|)
&\geq& \frac{1}{\Delta}(|L^{org}| - |L_{u}^{org}| - \Delta(\Delta-2)|H^{new}|)
\nonumber \\
&=& \frac{1}{\Delta}(|L^{org}| - |L_{u}^{org}|) - (\Delta-2)|H^{new}|.
\nonumber
\end{eqnarray}
Therefore, $(\Delta-1) |H^{new}| \geq \frac{1}{\Delta}(|L^{org}| - |L_{u}^{org}|)$.
The lemma follows.
\qed
\end{proof}

\begin{theorem}
The MHV problem can be approximated within a factor of $\Omega(\Delta^{-3})$
in polynomial time.
\end{theorem}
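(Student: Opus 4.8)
The plan is to analyze Algorithm {\sc Growth-MHV} directly, since it already runs on the general MHV instance (with $k$ part of the input) and the three preceding lemmas make no use of $k$ being constant. First I would pin down a clean lower bound on the value of the returned solution. An $H$-vertex has, by definition, all of its neighbors already colored with its own color, and the algorithm never recolors a vertex once colored; hence every vertex of $H^{org}$ stays happy through to the end, and likewise every vertex counted in $H^{new}$ remains happy after it is created. Since $H^{org}$ and $H^{new}$ are disjoint (one is fixed by the partial function $c$, the other is generated during execution), this yields
\begin{equation}
SOL \geq |H^{org}| + |H^{new}|. \nonumber
\end{equation}

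Next I would combine the two quantitative bounds already established. Rewriting Lemma~\ref{lm - lower bound on |H^new|} as $|L^{org}| - |L_{u}^{org}| \leq \Delta(\Delta-1)\,|H^{new}|$ and substituting this into the bound of Lemma~\ref{lm - upper bound on OPT} gives
\begin{equation}
OPT \leq |H^{org}| + (\Delta+1)\Delta(\Delta-1)\,|H^{new}| \leq |H^{org}| + \Delta^3 |H^{new}|, \nonumber
\end{equation}
where I use $(\Delta+1)\Delta(\Delta-1) = \Delta^3 - \Delta \leq \Delta^3$. Comparing this with the lower bound on $SOL$, and using $\Delta^3 \geq 1$, I would conclude
\begin{equation}
\Delta^3 \cdot SOL \geq \Delta^3|H^{org}| + \Delta^3|H^{new}| \geq |H^{org}| + \Delta^3|H^{new}| \geq OPT, \nonumber
\end{equation}
so that $SOL \geq \Delta^{-3}\, OPT$, which is exactly the claimed $\Omega(\Delta^{-3})$ ratio. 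The degenerate case is harmless: if $|H^{new}| = 0$ then Lemma~\ref{lm - lower bound on |H^new|} forces $|L^{org}| - |L_{u}^{org}| = 0$, and then Lemma~\ref{lm - upper bound on OPT} gives $OPT \leq |H^{org}| \leq SOL$, so the bound holds trivially (and when $OPT = 0$ there is nothing to prove).

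Finally I would verify the polynomial running time: each pass of the while loop colors at least one previously uncolored $L$-vertex and never uncolors a vertex, so there are at most $n$ iterations, each doing only local updates of vertex types within the touched $2$-neighborhoods. I expect the final algebra to be routine; the only points needing genuine care are (i) justifying $SOL \geq |H^{org}| + |H^{new}|$, i.e. that an originally happy vertex cannot be spoiled later by the coloring of a neighbor and that each new happy vertex is counted exactly once, and (ii) tracking the two separate $\Delta$-factors from Lemmas~\ref{lm - upper bound on OPT} and~\ref{lm - lower bound on |H^new|} so that they multiply to precisely $\Delta^3$ rather than a larger power. These are the steps I would present explicitly, the remainder being the chain of inequalities above.
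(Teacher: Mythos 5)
Your proposal is correct and takes essentially the same route as the paper: the paper's proof is precisely the combination $SOL = |H^{org}| + |H^{new}|$ with Lemma \ref{lm - lower bound on |H^new|} and Lemma \ref{lm - upper bound on OPT}, yielding $SOL \geq \frac{1}{\Delta(\Delta-1)(\Delta+1)}\, OPT$, and your substitution of Lemma \ref{lm - lower bound on |H^new|} into the bound of Lemma \ref{lm - upper bound on OPT} is just an algebraic rearrangement of the same two ingredients (with the same polynomial-time observation). The only cosmetic difference is that you relax the constant $(\Delta+1)\Delta(\Delta-1) = \Delta^3 - \Delta$ to $\Delta^3$, which is immaterial for the $\Omega(\Delta^{-3})$ claim.
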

\begin{proof}
Algorithm {\sc Growth-MHV} obviously runs in polynomial time.
Let $SOL$ be the number of happy vertices found by Algorithm {\sc Growth-MHV}.
Then we have
\begin{eqnarray}
SOL &=& |H^{org}| + |H^{new}| \nonumber \\
&\geq& |H^{org}| + \frac{1}{\Delta(\Delta-1)}\Bigl(|L^{org}| - |L_{u}^{org}|\Bigr)
\quad \mbox{(By Lemma \ref{lm - lower bound on |H^new|})}
\nonumber \\
&\geq& \frac{1}{\Delta(\Delta-1)(\Delta+1)}\Bigl(|H^{org}| + (\Delta+1)(|L^{org}| - |L_{u}^{org}|)\Bigr)
\nonumber \\
&\geq& \frac{1}{\Delta(\Delta-1)(\Delta+1)} OPT
\qquad \qquad \qquad \mbox{(By Lemma \ref{lm - upper bound on OPT})}
\nonumber \\
&=& \Omega(\Delta^{-3}) OPT. \nonumber
\end{eqnarray}
The theorem follows.
\qed
\end{proof}


\section{Algorithms for MHE}
\label{sec - algorithms for MHE}

\subsection{2-MHE Is in P}
For 2-MHE, the partial function $c$ can only use two colors, to say,
color 1 and color 2. Given such an instance, merge all vertices with color 1
assigned by $c$ into a single vertex $s$, and all vertices with color 2
into a single vertex $t$. (The edges whose two endpoints are merged disappear
in the procedure.) Then compute a minimum $s$-$t$ cut $(V_1, V_2)$ on
the resulting instance. Suppose $s \in V_1$ and $t \in V_2$. Assign color 1
to all vertices (including the merged vertices) in $V_1$, and color 2 to
all vertices in $V_2$. Since $(V_1, V_2)$ is a minimum $s$-$t$ cut,
the number of happy edges in the resulting vertex coloring is maximized.
By the work of \cite{ET75}, a maximum flow (and hence a minimum $s$-$t$)
in a unit capacity network can be computed in $O(\min\{n^{2/3}m, m^{3/2}\})$
time. So we have

\begin{theorem}
\label{th - 2-MHE is in P}
The 2-MHE problem can be solved in $O(\min\{n^{2/3}m, m^{3/2}\})$ time.
\qed
\end{theorem}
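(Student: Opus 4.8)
The plan is to reduce 2-MHE to a single minimum $s$-$t$ cut computation, exploiting the fact that with only two colors the happiness of an edge is determined entirely by whether its endpoints are separated by the final $2$-partition. First I would observe that extending the partial coloring $c$ to a total coloring using colors $1$ and $2$ is the same as choosing a partition $(V_1, V_2)$ of $V$ with $V_i^{org} \subseteq V_i$, where $V_i^{org}$ denotes the set of vertices pre-colored $i$ by $c$. Under this correspondence an edge is happy exactly when both of its endpoints lie in the same part, and unhappy exactly when it crosses between $V_1$ and $V_2$. Hence maximizing the number of happy edges is equivalent to minimizing the number of crossing edges, i.e., minimizing the cut $|E(V_1, V_2)|$ subject to the containment constraints $V_i^{org} \subseteq V_i$.

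Next I would realize this constrained minimum cut as an ordinary minimum $s$-$t$ cut. Contract all vertices of $V_1^{org}$ into a single source $s$ and all vertices of $V_2^{org}$ into a single sink $t$, deleting edges internal to each contracted set (these edges are happy in every feasible solution and contribute only a fixed additive constant, so they may be ignored), and give every remaining edge unit capacity. Every $s$-$t$ cut of the contracted graph corresponds, by restoring the contraction, to a feasible extension $(V_1, V_2)$, and its capacity equals the number of unhappy edges of that extension; conversely every feasible extension yields such a cut. Therefore a minimum $s$-$t$ cut yields a coloring whose number of happy edges is maximized, and by the max-flow/min-cut theorem this cut can be obtained from a maximum flow.

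For the running time I would invoke the unit-capacity maximum flow bound of \cite{ET75}, which computes a maximum flow, and hence a minimum $s$-$t$ cut, in $O(\min\{n^{2/3}m, m^{3/2}\})$ time; since contraction does not increase the number of vertices or edges, the same bound holds in terms of the original $n$ and $m$. The only case needing separate attention is when $c$ uses a single color (so that one of $V_1^{org}$, $V_2^{org}$ is empty): there, coloring all uncolored vertices with that color makes every edge happy and is trivially optimal.

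The main obstacle I anticipate is not any single deep step but the verification that the reduction is exact: one must check that the bijection between feasible $2$-colorings and $s$-$t$ cuts preserves objective values precisely, that contracting the two terminal sets does not create spurious cuts smaller than any feasible coloring permits, and that the discarded internal edges are correctly accounted for as a constant. Once this correspondence is pinned down, the theorem follows immediately from the max-flow/min-cut theorem together with the stated running time of \cite{ET75}.
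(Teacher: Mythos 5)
Your proposal is correct and takes essentially the same route as the paper: contract the vertices pre-colored $1$ into a source $s$ and those pre-colored $2$ into a sink $t$, compute a minimum $s$-$t$ cut in the resulting unit-capacity graph via the maximum-flow algorithm of \cite{ET75}, and translate the cut back into an optimal total $2$-coloring. Your additional remarks---accounting for the deleted internal edges as a fixed constant and handling the degenerate case where $c$ uses only one color---merely make explicit what the paper leaves implicit.
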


\subsection{Approximation Algorithm for MHE}
The MHE problem admits a simple division-strategy based algorithm which
yields a $1/2$-approximation. The algorithm is designed to
deal with more general graphs with nonnegative weights $\{w(e)\}$ defined
on edges. We thus denote by $w(E')$ the total weight of edges in an edge
subset $E'$.

\setcounter{algleo}{0}
\begin{algleo}
\linonumber {\bf Algorithm} {\sc Division-MHE}
\linonumber {\em Input:} An undirected graph $G$ and a partial coloring
    function $c$.
\linonumber {\em Output:} A total vertex coloring for $G$.
\li $G_1 \leftarrow G$.
\li Let $E'$ be the set of edges in $G_1$ that has exactly one endpoint not
    colored by function $c$. Define graph $G' = (V(G_1), E')$,
    which is a subgraph of $G_1$.
\li \label{step - Division-MHE - Graph G' and its stars}
    For each star $S$ in $G'$ centered at an uncolored vertex $v$, color $v$
    by a color in $\{c(u) \mid u \in N(v), u \mbox{ is colored} \}$
    such that the total weight of happy edges in $S$ is maximized.
\li Color all vertices in $G_1$ still having not been colored by just one
    arbitrary color. Denote by $SOL_1$ the vertex coloring of $G_1$.
\li $G_2 \leftarrow G$.
\li Color all uncolored vertices in $G_2$ by just one arbitrary color.
    Denote by $SOL_2$ the vertex coloring of $G_2$.
\li {\bf return} the better one among $SOL_1$ and $SOL_2$.
\end{algleo}

Algorithm {\sc Division-MHE} computes two independent solutions
$SOL_1$ and $SOL_2$ to graph $G$, and then outputs the better one,
where the better one means the solution making more edges happy.
For an illustration of graph $G'$ and its stars in
step \ref{step - Division-MHE - Graph G' and its stars},
please refer to Figure \ref{fig - stars in graph G'}.

\begin{figure}
\begin{center}
\includegraphics*[width=0.6\textwidth]{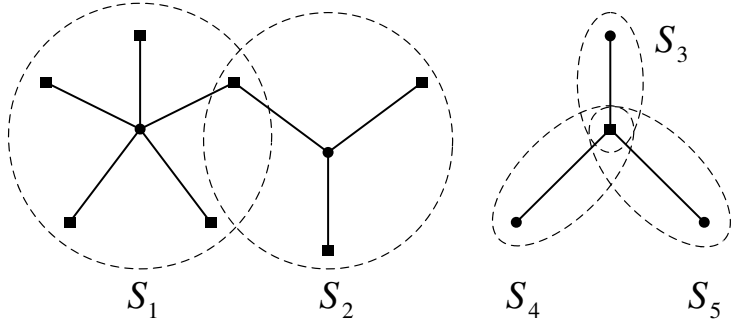}
\end{center}
\caption{An example of graph $G'$. Each edge in $G'$ has its one endpoint
colored and the other endpoint uncolored. The square vertices mean colored
vertices, while the round vertices mean uncolored vertices. Each star
(marked with dashed circle) is centered at an uncolored vertex.
Two stars (e.g., $S_1$ and $S_2$) may share common colored vertices.}
\label{fig - stars in graph G'}
\end{figure}

\begin{theorem}
\label{th - algorithm Division-MHE is a 1/2-approxalg}
Algorithm {\sc Division-MHE} is a $1/2$-approximation algorithm for
the MHE problem.
\end{theorem}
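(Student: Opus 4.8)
The plan is to show that the two candidate colorings $SOL_1$ and $SOL_2$ are complementary, so that their combined happy weight already reaches $OPT$; returning the better of the two then loses at most a factor of two. First I would partition $E$ according to how many endpoints are precolored by $c$: let $E_{cc}$ be the edges with both endpoints colored, $E'$ the edges with exactly one colored endpoint (these are precisely the edges of the graph $G'$), and $E_{uu}$ the edges with both endpoints uncolored. The key structural observation is that the happy weight contributed by $E_{cc}$ is a constant $H_c$ determined by $c$ alone, identical in $SOL_1$, in $SOL_2$, and in any optimal coloring, so it factors out. Writing $OPT = H_c + A^* + B^*$, where $A^*$ and $B^*$ denote the happy weight an optimal solution places on $E'$ and on $E_{uu}$ respectively, reduces the problem to bounding the two variable parts separately.

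Next I would bound $SOL_1$ from below by $H_c + A^*$. The crucial point is that every edge of $E'$ has exactly one uncolored endpoint, so it belongs to exactly one star of $G'$, and whether such an edge is happy depends only on the color assigned to that uncolored center, since its colored endpoint is frozen. Hence the stars are edge-disjoint on $E'$ and the per-star decisions in step \ref{step - Division-MHE - Graph G' and its stars} decouple completely; choosing, independently for each center, the color that maximizes the happy weight of its own star therefore yields the global maximum happy weight achievable over all of $E'$. In particular this value is at least $A^*$, the happy weight the optimum attains on the very same edges. Since $SOL_1$ additionally retains $H_c$ and contributes a nonnegative amount on $E_{uu}$, we obtain $SOL_1 \geq H_c + A^*$.

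Then I would bound $SOL_2$ from below by $H_c + B^*$. Because $SOL_2$ colors all uncolored vertices with a single common color, every edge of $E_{uu}$ becomes happy, contributing its full weight $w(E_{uu}) \geq B^*$; together with the constant $H_c$ and the nonnegative contribution on $E'$ this gives $SOL_2 \geq H_c + w(E_{uu}) \geq H_c + B^*$. Adding the two bounds yields $SOL_1 + SOL_2 \geq 2H_c + A^* + B^* \geq H_c + A^* + B^* = OPT$, where the last inequality uses only $H_c \geq 0$. Since the algorithm returns $\max\{SOL_1, SOL_2\} \geq \frac{1}{2}(SOL_1 + SOL_2) \geq \frac{1}{2}OPT$, and both solutions are plainly computable in polynomial time, the $1/2$-approximation guarantee follows.

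The step I expect to require the most care is the decoupling argument for $SOL_1$: one must verify that optimizing each star of $G'$ in isolation really maximizes the happy weight over all of $E'$ simultaneously, which hinges on the fact that distinct stars share only frozen colored leaves and never share an edge, so their decisions cannot interfere. I would also take care to confirm that $SOL_1$ is a genuine total coloring, with the uncolored vertices having no colored neighbor handled by the arbitrary fallback coloring, so that the nonnegativity of its $E_{uu}$-contribution is legitimately invoked rather than silently assumed.
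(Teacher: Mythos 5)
Your proposal is correct and follows essentially the same route as the paper's proof: the same three-way edge decomposition (both endpoints colored, exactly one colored via the stars of $G'$, both uncolored), the same two lower bounds $SOL_1 \geq H_c + A^*$ and $SOL_2 \geq H_c + w(E_{uu})$, and the same averaging of the two candidate solutions. The only difference is that you explicitly justify the star-decoupling step (edge-disjointness of the stars on $E'$, so per-center choices cannot interfere), which the paper asserts in a single sentence; this is a welcome elaboration but not a different argument.
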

\begin{proof}
First, the algorithm obviously runs in polynomial time.

Let $W^{org}$ be the total weight of edges already being happy by
the partial coloring function $c$. This weight can be trivially obtained
by any solution.

Let $W'$ be the total weight of happy edges found by Algorithm
{\sc Division-MHE} on graph $G'$. Note that $W'$ is the maximum
total weight that can be obtained from graph $G'$.
Let $E''$ be the set of edges that has both of its two endpoints uncolored
by function $c$, and $W'' = w(E'')$ be its total weight.
Then we have $OPT \leq W^{org} + W' + W''$.

By the algorithm, we know $SOL_1 \geq W^{org} + W'$ and
$SOL_2 \geq W^{org} + W''$. Then the approximation ratio $1/2$ of
{\sc Division-MHE} is obvious
since $SOL = \max\{SOL_1, SOL_2\} \geq \frac12 (W^{org} + W' + W'')$.
\qed
\end{proof}


\section{Hardness Results}
\label{sec - Hardness results}

\subsection{NP-hardness of MHE}
\label{subsec - NP-hardness of MHE}
The NP-hardness of the $3$-MHE problem is proved by a reduction from
the Multiway Cut problem \cite{DJP+94}.

\begin{theorem}
\label{th - 3-MHE is NP-hard}
The 3-MHE problem is NP-hard.
\end{theorem}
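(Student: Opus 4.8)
The plan is to reduce the $3$-terminal, unit-cost Multiway Cut problem, which is NP-hard by \cite{DJP+94}, to $3$-MHE, exploiting the duality sketched in Subsection \ref{sec - related works}. Given a Multiway Cut instance consisting of a graph $G = (V, E)$ with unit edge costs and three terminals $S = \{s_1, s_2, s_3\}$, I would construct a $3$-MHE instance on the \emph{same} graph $G$, with color set $C = \{1, 2, 3\}$ and partial coloring $c$ that assigns color $i$ to $s_i$ for $i = 1, 2, 3$ and leaves every other vertex uncolored. This is a polynomial-time construction, and it realizes exactly the special case of the correspondence where each color is pre-assigned to a single vertex. Using three terminals (rather than more) is precisely what produces a $3$-color instance, so the hardness lands on $3$-MHE.

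The core of the argument is the exact identity $OPT_{\mathrm{MHE}} = m - OPT_{\mathrm{MWC}}$, where $m = |E|$, $OPT_{\mathrm{MHE}}$ is the maximum number of happy edges, and $OPT_{\mathrm{MWC}}$ is the minimum multiway cut size. First I would prove $OPT_{\mathrm{MHE}} \leq m - OPT_{\mathrm{MWC}}$: any total coloring $c'$ extending $c$ partitions $V$ into color classes $(V_1, V_2, V_3)$ with $s_i \in V_i$, and the unhappy edges are precisely those crossing between classes. Removing them separates $s_1, s_2, s_3$ from one another, so they form a valid multiway cut of size at least $OPT_{\mathrm{MWC}}$; hence the number of happy edges is at most $m - OPT_{\mathrm{MWC}}$.

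For the reverse inequality $OPT_{\mathrm{MHE}} \geq m - OPT_{\mathrm{MWC}}$, I would take a minimum multiway cut $F$ and build a coloring from it. Removing $F$ breaks $G$ into connected components, each containing at most one terminal; I color the component of $s_i$ with color $i$ and color each terminal-free component with an arbitrary color. Every edge that is unhappy under this coloring joins two distinct post-removal components and therefore lies in $F$, so the number of happy edges is at least $m - |F| = m - OPT_{\mathrm{MWC}}$. Combining the two inequalities yields the identity, and the decision versions correspond under the substitution $t \leftrightarrow m - t$ (a coloring with at least $t$ happy edges exists iff a multiway cut of size at most $m - t$ exists). Since an optimal $3$-MHE solution thus determines an optimal $3$-terminal Multiway Cut, the NP-hardness transfers to $3$-MHE.

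The construction is transparent, so the point requiring the most care is the reverse inequality: one must verify that coloring the terminal-free components arbitrarily cannot create an unhappy edge \emph{outside} $F$. This follows because any edge between two distinct components obtained by deleting $F$ is, by definition of the components, an edge of $F$; no new crossing edges can appear. A minor but worthwhile bookkeeping step is to confirm that each component holds at most one terminal (a consequence of $F$ being a multiway cut), which is what makes the color assignment $s_i \mapsto i$ well defined.
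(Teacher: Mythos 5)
Your proposal is correct and follows essentially the same route as the paper: the identical reduction from $3$-terminal Multiway Cut (same graph, color $i$ pre-assigned to terminal $s_i$, all other vertices uncolored) together with the identity $m^* = m - c^*$. The only difference is that you spell out both inequalities of that identity, which the paper asserts as ``one can easily find''; your detailed verification, including the check that arbitrarily colored terminal-free components create no unhappy edges outside the cut, is exactly the argument implicit there.
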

\begin{proof}
Given an undirected graph $G = (V, E)$ and a terminal set $D = \{s_1, s_2, s_3\}$,
the 3-Terminal Cut problem (i.e., the Multiway Cut problem with
3 terminals), which is NP-hard \cite{DJP+94}, asks for a minimum cardinality
edge set such that its removal from $G$ disconnects the three terminals
from one another. Given an instance $(G, D)$ of 3-Terminal Cut, we construct
the instance $(H, C, c)$ of 3-MHE as follows.
Graph $H$ is just $G$. Color set $C$ is set to be $\{1, 2, 3\}$.
The partial function $c$ assigns colors 1, 2, 3 to vertices $s_1, s_2, s_3$,
respectively. Let $c^*$ be the cardinality of an optimal 3-way cut for
$(G, D)$, and $m^*$ be the number of happy edges of an optimal vertex
coloring for $(H, C, c)$. Then one can easily find that $m^* = m - c^*$,
where $m = |E(G)|$ ($=|E(H)|$). This shows the 3-MHE problem is NP-hard.
\qed
\end{proof}

\begin{corollary}
The MHE problem is NP-hard.
\end{corollary}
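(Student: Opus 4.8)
The plan is to observe that the NP-hardness of MHE follows essentially for free from the just-established Theorem \ref{th - 3-MHE is NP-hard}, because 3-MHE is not a separate problem but a special case of MHE. Recall from the definitions that MHE takes the number of colors $k$ as part of its input, whereas $k$-MHE is simply MHE with $k$ fixed to a constant; in particular 3-MHE is precisely MHE restricted to instances whose color set is $\{1, 2, 3\}$.

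First I would note that any instance of 3-MHE --- a graph $H$, the color set $C = \{1, 2, 3\}$, and a partial coloring $c$ --- is already a syntactically valid instance of MHE, obtained by simply declaring $k = 3$. This identity embedding is trivially computable in polynomial time, and it preserves both the instance and the objective value of any candidate solution, since the notion of a happy edge (an edge whose two endpoints receive the same color) is defined identically in the two problems. Hence the identity map is a legitimate polynomial-time reduction from 3-MHE to MHE. Consequently, a polynomial-time algorithm for MHE would solve 3-MHE in polynomial time as well, contradicting Theorem \ref{th - 3-MHE is NP-hard} unless $P = NP$; therefore MHE is NP-hard.

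The main --- and essentially the only --- obstacle here is conceptual rather than computational: one must confirm carefully that fixing $k = 3$ in the general MHE problem yields exactly the 3-MHE problem as defined, so that no nontrivial gadget or transformation is required and the reduction is merely the identity. There are no calculations to carry out, and the argument is a routine ``special case inherits hardness'' observation that closes out the section.
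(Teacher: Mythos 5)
Your proposal is correct and matches the paper's proof exactly: the paper's one-line argument ("in the input of MHE, just set $k$ to be 3") is precisely your observation that 3-MHE is the special case of MHE with $k = 3$, so the identity embedding transfers the NP-hardness of Theorem \ref{th - 3-MHE is NP-hard}. Your write-up simply spells out the reduction in more detail.
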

\begin{proof}
In the input of MHE, just set $k$ to be 3. \qed
\end{proof}


\begin{theorem}
\label{th - k-MHE is NP-hard}
The $k$-MHE problem is NP-hard for any constant $k \geq 3$.
\end{theorem}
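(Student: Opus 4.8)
The plan is to reduce from the $3$-MHE problem, which was just shown to be NP-hard in Theorem \ref{th - 3-MHE is NP-hard}. Since $3$-MHE is the special case of $k$-MHE with exactly three colors, the goal is to show that for every fixed $k \geq 3$, an instance of $3$-MHE can be encoded as an instance of $k$-MHE so that the optimal number of happy edges is preserved (or preserved up to an easily computed additive constant). First I would take an arbitrary instance $(H, C_3, c)$ of $3$-MHE, where $C_3 = \{1, 2, 3\}$, and build an instance of $k$-MHE over the color set $C_k = \{1, 2, \ldots, k\}$.

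The key idea is to make the $k - 3$ extra colors \emph{useless} in any optimal solution, so that an optimal $k$-coloring never profits from using colors $4, \ldots, k$ and thus effectively behaves as a $3$-coloring. The simplest device is a set of \emph{anchor gadgets}: for each extra color $j \in \{4, \ldots, k\}$, attach a pendant structure whose pre-colored vertices force color $j$ to be present but make using color $j$ on any vertex adjacent to the original graph $H$ strictly disadvantageous. Concretely, I would leave $H$ and its partial coloring $c$ untouched, and simply add $k - 3$ new isolated pre-colored vertices (or tiny gadgets), one of color $j$ for each $j \geq 4$, that are \emph{not} adjacent to any vertex of $H$. Because these gadget vertices are disconnected from $H$, they contribute no edges, and the only way to create happy edges among the vertices of $H$ is to use colors drawn from the neighbourhoods inherited from $c$, exactly as in the $3$-MHE instance. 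The optimal happy-edge count of the $k$-MHE instance then equals that of the original $3$-MHE instance, giving the reduction.

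The step I expect to be the main obstacle is arguing rigorously that an optimal $k$-coloring gains nothing from the additional colors on the vertices of $H$. One must verify that for any edge $e$ of $H$, using one of the new colors $4, \ldots, k$ on an endpoint can never make $e$ happy unless \emph{both} endpoints receive that same new color; and since every pre-colored vertex of $H$ already carries a color in $\{1, 2, 3\}$, any monochromatic component of uncolored vertices that would benefit from color $j \geq 4$ is free of constraints and can be recolored with one of $\{1,2,3\}$ without decreasing the happy-edge count. Formalizing this exchange argument — showing that from any $k$-coloring one can construct a $3$-coloring of $H$ with at least as many happy edges — is the crux. A clean way to phrase it is: given any optimal solution to the $k$-MHE instance, restrict attention to the connected components of $H$; each component that uses a color outside $\{1,2,3\}$ can have that color swapped for any of $\{1,2,3\}$ without loss, and repeating this yields a solution that is effectively a $3$-coloring, establishing $OPT_{k\text{-MHE}} = OPT_{3\text{-MHE}}$.

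With the optimality correspondence in hand, the conclusion is immediate: the reduction is computable in polynomial time (indeed linear, since we only append $k - 3 = O(1)$ vertices), and it preserves the optimum exactly, so a polynomial-time algorithm for $k$-MHE would yield one for $3$-MHE, contradicting Theorem \ref{th - 3-MHE is NP-hard}. Hence $k$-MHE is NP-hard for every constant $k \geq 3$.
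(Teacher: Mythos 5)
Your proposal is correct and takes essentially the same route as the paper: a reduction from $3$-MHE (Theorem \ref{th - 3-MHE is NP-hard}) that pads the instance with pre-colored dummy vertices carrying the extra colors $4, \ldots, k$. The only differences are minor --- the paper attaches pendant pre-colored edges $(x_i, y_i)$ to a vertex of $G$ (so the optimum shifts by exactly $k-3$), whereas you use isolated gadget vertices with no new edges and spell out explicitly the exchange argument (recolor every color class $j \geq 4$ to a color in $\{1,2,3\}$, which never destroys a happy edge) that the paper's one-line optimum claim leaves implicit.
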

\begin{proof}
By Theorem \ref{th - 3-MHE is NP-hard}, we need only focus on $k > 3$.
Let $k$ be such a constant.

Given a 3-MHE instance $(G, c)$, we construct a $k$-MHE instance $(G', c')$
as follows. Build $2(k-3)$ vertices $x_4, y_4, x_5, y_5, \cdots, x_k, y_k$
and $k-3$ edges $(x_i, y_i)$, $4 \leq i \leq k$.
Vertices $x_i$ and $y_i$ are colored by color $i$, for $4 \leq i \leq k$.
Let $v$ be a vertex in $G$ whose color given by $c$ is 1.
Then put $k-3$ edges $(v, x_i)$, $4 \leq i \leq k$.
This is our new instance $(G', c')$.

Obviously for $4 \leq i \leq k$, each edge $(x_i, y_i)$ is happy
whereas each edge $(v, x_i)$ is unhappy.
So, the optimum of $(G, c)$ is just equal to the optimum of $(G', c')$
minus $k-3$, concluding the theorem.
\qed
\end{proof}

\subsection{NP-hardness of MHV}
\label{subsec - NP-hardness of MHV}
\begin{theorem}
The $k$-MHV problem is NP-hard for any constant $k \geq 3$.
\end{theorem}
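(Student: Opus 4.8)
The plan is to first establish the base case that the 3-MHV problem is NP-hard by a reduction from the 3-Terminal Cut problem (i.e., the Multiway Cut problem with three terminals), which is NP-hard \cite{DJP+94} and was already the source of hardness for 3-MHE (Theorem \ref{th - 3-MHE is NP-hard}). The difficulty compared with the MHE reduction is that the objective here counts happy \emph{vertices} rather than happy edges, so I need a gadget that translates ``an edge is monochromatic'' into ``a vertex is happy'', while simultaneously preventing the happiness of the original vertices from polluting the count.

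Given a 3-Terminal Cut instance $(G, \{s_1, s_2, s_3\})$ with $n_G = |V(G)|$ and $m_G = |E(G)|$, I would build a 3-MHV instance $(H, C, c)$ as follows. Set $C = \{1,2,3\}$ and color each terminal $s_i$ by color $i$. Subdivide every edge $(u,v) \in E(G)$, that is, delete it and insert a fresh uncolored vertex $w_{uv}$ of degree two adjacent to $u$ and $v$. Finally, attach to every original vertex $u \in V(G)$ three pendant (degree-one) vertices precolored $1$, $2$, and $3$, respectively. All original non-terminal vertices and all subdivision vertices are left uncolored.

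The analysis then classifies the happy vertices of $H$ under any total coloring $c'$. Each original vertex $u$ is adjacent to pendants of all three colors, so it is destined to be unhappy and contributes $0$; meanwhile, among its three pendants exactly one matches $c'(u)$ and is happy, so each original vertex contributes exactly $1$ via its pendants, a fixed total of $n_G$ independent of the coloring. A subdivision vertex $w_{uv}$ has only the neighbors $u$ and $v$, so it can be made happy if and only if $c'(u) = c'(v)$ (set $c'(w_{uv})$ to their common color), and since $w_{uv}$ affects no other vertex's happiness, an optimal coloring makes it happy exactly when its endpoints agree. Hence the number of happy vertices equals $n_G$ plus the number of monochromatic edges of $G$ under $c'$ restricted to the original vertices, giving $OPT = n_G + m_G - \kappa$, where $\kappa$ is the size of a minimum $3$-way cut of $(G, \{s_1, s_2, s_3\})$. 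This affine correspondence is computable in polynomial time, so 3-MHV is NP-hard.

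To obtain the full statement for every constant $k \geq 3$, I would pad the 3-MHV instance in the spirit of the $k$-MHE reduction (Theorem \ref{th - k-MHE is NP-hard}): for each $4 \leq i \leq k$ add a disjoint edge $(x_i, y_i)$ with both $x_i$ and $y_i$ precolored by color $i$. Each such component has both of its endpoints happy regardless of the rest of the coloring, so the optimum increases by exactly $2(k-3)$, preserving the reduction. The main obstacle is the base-case analysis, specifically arguing that the three-pendant gadget forces every original vertex to be unhappy while contributing a coloring-independent constant, so that the happy-vertex count cleanly isolates the monochromatic-edge (equivalently, the cut) quantity; once this is verified, the padding step is routine.
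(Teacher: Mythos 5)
Your proposal is correct, but it takes a genuinely different route from the paper's. The paper does not go back to Multiway Cut for this theorem: having already established that $k$-MHE is NP-hard for every constant $k \geq 3$ (Theorem \ref{th - k-MHE is NP-hard}), it reduces $k$-MHE to $k$-MHV in a single uniform step, using the same degree-2 subdivision vertices $y_{uv}$ that you use, but killing the happiness of the original vertices with $k$ \emph{universal} precolored vertices $x_1, \dots, x_k$ joined to every vertex of $G$; this makes the two optima exactly equal, with no additive offset and no separate treatment of $k = 3$ versus $k > 3$. You instead reduce 3-Terminal Cut directly to 3-MHV with a \emph{local} kill gadget (three pendants per vertex), which yields the affine identity $OPT = n_G + m_G - \kappa$ and, as a bonus the paper's reduction does not provide, shows hardness already on graphs of maximum degree $\Delta_G + 3$ --- a mildly stronger conclusion in light of the paper's $\Omega(\Delta^{-3})$-approximation, whereas the universal vertices $x_i$ force degree $\Theta(n)$. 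The one step you should not dismiss as ``routine'' is the padding: once $k > 3$ colors are available, an uncolored original vertex $u$ may receive a color in $\{4, \dots, k\}$, in which case \emph{none} of its three pendants is happy, and subdivided edges may become monochromatic in these extra colors; so the pendant contribution is no longer the coloring-independent constant $n_G$, and the claim that the optimum rises by exactly $2(k-3)$ does not follow merely from the padded components being happy. The claim is true, but it needs the observation that recoloring every vertex of color at least $4$ to color $1$ never decreases the number of happy vertices (pendants can only gain, a happy subdivision vertex stays happy since its whole closed neighborhood is recolored together, original vertices are unhappy regardless, and the padded components are untouched) --- the same trick the paper itself uses in the proof of the claim inside its SoftMHV hardness theorem. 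Incidentally, this recoloring argument makes your padding unnecessary: since nothing in the definition of $k$-MHV requires the coloring to be surjective, the base instance viewed over the color set $\{1, \dots, k\}$ already has the same optimum as over $\{1,2,3\}$.
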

\begin{proof}
By Theorem \ref{th - k-MHE is NP-hard}, $k$-MHE is NP-hard ($k \geq 3$).
We thus reduce $k$-MHE to $k$-MHV.

Let $(G, c)$ be a $k$-MHE instance. The instance $(G', c')$ of $k$-MHV
is constructed as follows.
Add $k$ vertices $x_1, x_2, \cdots, x_k$ and put an edge between $x_i$
and $v$, for each $1 \leq i \leq k$ and each $v \in V(G)$.
Vertex $x_i$ is colored by $i$, for $1 \leq i \leq k$.
For every edge $(u, v) \in E(G)$, add a vertex $y_{uv}$ and replace the edge
by two edges $(u, y_{uv})$ and $(y_{uv}, v)$.
This is our new instance $(G', c')$.

Since in graph $G$ there are vertices with pre-specified colors,
each $x_i$ ($1 \leq i \leq k$) cannot become happy no matter how the remaining
vertices are colored. Every original vertex $v \in V(G)$ also cannot become happy
since it is adjacent to all $x_i$'s.
Let $(u, v)$ be any edge in $G$. Since the degree of vertex $y_{uv}$ is 2,
it is happy iff its two neighbors have the same color.
This shows that the optimum of the $k$-MHE instance $(G, c)$ is equal to
the optimum of the $k$-MHV instance $(G', c')$.
The theorem follows.
\qed
\end{proof}

\section{Conclusions}
\label{sec - conclusions}
The MHV problem and the MHE problem are two natural graph coloring problems
arising in the homophyly phenomenon of networks. In this paper we prove
the NP-hardness of the MHV problem and the MHE problem, and give several
approximation algorithms for these two problems.

Since our algorithms {\sc Greedy-MHV}, {\sc Growth-MHV} and {\sc Division-MHE}
actually do not care whether the color number $k$ is given in the input or
whether $k$ is a constant, the $k$-MHV and $k$-MHE problems can also be
approximated within $\max\{1/k, \Omega(\Delta^{-3})\}$ and $1/2$, respectively.

To improve the approximation ratios for MHV and MHE remains an immediate
open problem. It is also interesting to study the MHV and MHE problems
in random graphs generated from the classical network models, and in the
real-world large networks.

\section*{Acknowledgements}
We are grateful for fruitful discussions on this paper with
Dr. Mingji Xia at Institute of Software, Chinese Academy of Sciences.

\appendix

\section*{Appendix}

\section{Variants of MHV}
\label{sec - variants of MHV}
For a vertex $v$ in the MHV problem, instead of requiring that {\em all}
neighbors of $v$ have the same color as that of $v$, to make $v$ happy
we may only require at least $\rho \cdot \deg(v)$ neighbors have the same
color as that of $v$, or only require at least $q$ neighbors have the color
identical to that of $v$, for some global number $q$. This leads to two
natural variants of the MHV problem, that is, the SoftMHV problem and
the HardMHV problem. Similarly, we can define the corresponding varints
for the $k$-MHV problem, and our results in this section naturally extends
to these variants. For simplicity, we only consider approximation algorithms
for the SoftMHV and HardMHV problems.

Fix a vertex coloring, and let $v$ be a (colored or uncolored) vertex.
Define $N_i(v)$ to be the set of vertices in $N(v)$ which has color $i$,
for $1 \leq i \leq k$.

\section{MHV with Soft Threshold}
Let $\rho$ be a number in $(0, 1)$. In the soft-threshold extension of
the MHV problem (SoftMHV for short), a vertex $v$ is happy if $v$ is
colored and $|N^s(v)| \geq \rho \cdot \deg(v)$.
Given a connected undirected graph $G$, a partial coloring function $c$,
the SoftMHV problem asks for a total vertex coloring extended from $c$ that
maximizes the number of happy vertices.
(The number $\rho$ can be given as a part of the input or be a constant.
We do not distinguish between these two cases for simplicy.)

\subsection{Algorithm for SoftMHV}
As what is done in Definition \ref{def - types of vertices in MHV},
we define the types of vertices according to the given vertex coloring.

\begin{definition}[Types of vertex in SoftMHV]
\label{def - types of vertex - SoftMHV}
Fix a (partial or total) vertex coloring. Let $v$ be a vertex. Then,
\begin{enumerate}
\item $v$ is an $H$-vertex if $v$ is colored and happy;

\item $v$ is a $U$-vertex if
\begin{enumerate}
\item $v$ is colored, and
\item $v$ is destined to be unhappy, (i.e., $\deg(v) - |N^d(v)| < \rho \cdot \deg(v)$);
\end{enumerate}

\item $v$ is a $P$-vertex if
\begin{enumerate}
\item $v$ is colored,
\item $v$ has not been happy (i.e., $|N^s(v)| < \rho \cdot \deg(v)$), and
\item $v$ can become an $H$-vertex (i.e., $|N^s(v)| + |N^u(v)| \geq \rho \cdot \deg(v)$);
\end{enumerate}

\item $v$ is an $L$-vertex if $v$ has not been colored.
\end{enumerate}
\end{definition}

We note that Algorithm {\sc Greedy-MHV} is also a $1/k$-approximation
algorithm for the SoftMHV problem. To see this, we just define $L_P$
in Theorem \ref{th - 1/k-approximation for MHV} as the set of uncolored
vertices $v$ such that $|N^u(v)| + \max \{|N_i(v)|\} \geq \rho \cdot \deg(v)$,
and $L_D = L - L_P$.

\begin{theorem}
The SoftMHV problem can be approximated within a factor of $1/k$
in polynomial time. \qed
\end{theorem}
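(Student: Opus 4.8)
The plan is to mirror the proof of Theorem \ref{th - 1/k-approximation for MHV}, reusing Algorithm {\sc Greedy-MHV} verbatim and only adapting the vertex accounting to the soft threshold. Following the note preceding the statement, I would fix the partial coloring $c$, use the four types $H$, $U$, $P$, $L$ of Definition \ref{def - types of vertex - SoftMHV}, and split the uncolored set $L$ into
\[
L_P = \{ v \in L : |N^u(v)| + \max_{1 \le i \le k} |N_i(v)| \ge \rho \deg(v) \}, \qquad L_D = L \setminus L_P,
\]
so that $(H, P, U, L_P, L_D)$ is a partition of $V(G)$.

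The first step is the upper bound $OPT \le |H| + |P| + |L_P|$. Here the key observation is that $L_D$ is exactly the set of uncolored vertices that can never be made happy: the best any coloring can do for an uncolored $v$ is to give $v$ the most popular color $i^*$ among its already-colored neighbors and to color all of $N^u(v)$ with $i^*$ as well, yielding at most $|N^u(v)| + \max_i |N_i(v)|$ same-colored neighbors; for $v \in L_D$ this quantity is strictly below $\rho \deg(v)$. Together with the fact that the $U$-vertices are destined to be unhappy by definition, no optimal solution can make happy any vertex outside $H \cup P \cup L_P$.

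The second step is the lower bound $\sum_{i} SOL_i \ge |H| + |P| + |L_P|$, where $SOL_i$ is the number of happy vertices produced when {\sc Greedy-MHV} colors every uncolored vertex by $i$. It suffices to show each vertex of $H \cup P \cup L_P$ is happy in at least one of the $k$ colorings. An $H$-vertex already has $|N^s(v)| \ge \rho \deg(v)$, and since coloring additional (uncolored) neighbors never decreases $|N^s(v)|$, it stays happy in all $k$ colorings. A $P$-vertex of color $j$ satisfies $|N^s(v)| + |N^u(v)| \ge \rho \deg(v)$, so in the coloring $SOL_j$ all of $N^u(v)$ receive color $j$ and $v$ becomes happy. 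A vertex $v \in L_P$ with maximizing color $i^*$ becomes happy in $SOL_{i^*}$, since there $v$ and all of $N^u(v)$ are colored $i^*$, giving $|N^s(v)| = |N_{i^*}(v)| + |N^u(v)| \ge \rho \deg(v)$. Averaging then gives $SOL = \max_i SOL_i \ge \frac{1}{k} \sum_i SOL_i \ge \frac{1}{k}(|H| + |P| + |L_P|) \ge \frac{1}{k} OPT$, and {\sc Greedy-MHV} clearly runs in polynomial time.

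The main obstacle is purely the soft-threshold bookkeeping in the second step: unlike the hard-coloring case of Theorem \ref{th - 1/k-approximation for MHV}, where an $H$-vertex has no uncolored neighbors at all, here an $H$-vertex may still be adjacent to uncolored vertices, so one must check that the monotone $SOL_i$-coloring can only help (never hurt) such a vertex, and that the characterization of $L_P$ via $|N^u(v)| + \max_i |N_i(v)|$ is simultaneously the correct ``can become happy'' predicate for the upper bound and is realized by the single greedy coloring $SOL_{i^*}$ for the lower bound. Once these soft-threshold counts are confirmed, the averaging argument is identical to the MHV case.
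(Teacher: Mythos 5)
Your proposal is correct and follows exactly the route the paper intends: the paper's entire argument is the one-line remark that {\sc Greedy-MHV} still works once $L_P$ is redefined as $\{v \in L : |N^u(v)| + \max_i |N_i(v)| \ge \rho \deg(v)\}$, and you have simply filled in the details (the upper bound $OPT \le |H| + |P| + |L_P|$, the per-type happiness in the colorings $SOL_i$, and the averaging) that the paper leaves implicit by appeal to Theorem \ref{th - 1/k-approximation for MHV}. Your extra check that an $H$-vertex with uncolored neighbors stays happy under every greedy coloring is a genuine point the paper glosses over, and you resolve it correctly.
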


Below we give the subset-growth approximation algorithm {\sc Growth-SoftMHV}
for the SoftMHV problem. First we define the subtypes of $L$-vertex.

\begin{definition}[Subtypes of $L$-vertex in SoftMHV]
Let vertex $v$ be an $L$-vertex in a vertex coloring. Then,
\begin{enumerate}
\item $v$ is an $L_p$-vertex if $v$ is adjacent to a $P$-vertex,

\item $v$ is an $L_{h}$-vertex if
\begin{enumerate}
\item $v$ is not adjacent to any $P$-vertex,
\item $v$ is adjacent to an $H$-vertex or a $U$-vertex, and
\item $v$ can become happy (that is,
$|N^u(v)| + \max \{|N_i(v)| \colon 1 \leq i \leq k \} \geq \rho \cdot \deg(v)$),
\end{enumerate}

\item $v$ is an $L_{u}$-vertex if
\begin{enumerate}
\item $v$ is not adjacent to any $P$-vertex,
\item $v$ is adjacent to an $H$-vertex or a $U$-vertex, and
\item $v$ is destined to be unhappy (that is,
$|N^u(v)| + \max \{|N_i(v)| \colon 1 \leq i \leq k \} < \rho \cdot \deg(v)$),
\end{enumerate}

\item $v$ is an $L_f$-vertex if $v$ is not adjacent to any colored
vertex.
\end{enumerate}
\end{definition}

\setcounter{algleo}{0}
\begin{algleo}
\linonumber {\bf Algorithm} {\sc Growth-SoftMHV}
\linonumber {\em Input:} A connected undirected graph $G$ and a partial
    coloring function $c$.
\linonumber {\em Output:} A total vertex coloring for $G$.

\li $\forall 1 \leq i \leq k$, $V_i \leftarrow \{v \colon c(v) = i\}$.
\li \label{step - Growth-SoftMHV - beginning of the main loop}
    {\bf while} there exist $L$-vertices {\bf do}
\begin{algleo}
    \li {\bf if} there exists a $P$-vertex $v$ {\bf then}
        \begin{algleo}
            \li $i \leftarrow c(v)$.
            \li \label{step - Growth-SoftMHV - process P-vertex}
                Add its any $\lceil \rho \cdot \deg(v) \rceil - |N^s(v) \cap V_i|$
                $L_p$-neighbors to vertex subset $V_i$.
                The types of all affected vertices (including $v$ and vertices
                in $N^2(v)$) are changed accordingly.
        \end{algleo}
    \li {\bf elseif} there exists an $L_{h}$-vertex $v$ {\bf then}
        \begin{algleo}
            \li Let $V_i$ be the vertex subset in which $v$ has
                the maximum colored neighbors.
            \li \label{step - Growth-SoftMHV - process Lh-vertex}
                Add vertex $v$ and its any $\lceil \rho \cdot \deg(v) \rceil - |N^s(v) \cap V_i|$
                $L$-neighbors to vertex subset $V_i$.
                The types of all affected vertices (including $v$ and vertices
                in $N^2(v)$) are changed accordingly.
        \end{algleo}
    \li {\bf else}
        \begin{algleo}
            \linonumber {\em Comment:} There must be an $L_{u}$-vertex.
            \li Let $v$ be any $L_{u}$-vertex, and $V_i$ be any vertex subset in which
                $v$ has colored neighbors.
            \li Add vertex $v$ to subset $V_i$.
                The types of all affected vertices (including $v$ and vertices
                in $N(v)$) are changed accordingly.
        \end{algleo}
    \li {\bf endif}
\end{algleo}
\li {\bf endwhile}
\li {\bf return} the vertex coloring $(V_1, V_2, \cdots, V_k)$.
\end{algleo}

In step \ref{step - Growth-SoftMHV - process P-vertex},
the algorithm adds the least number
(that is, $\lceil \rho \cdot \deg(v) \rceil - |N^s(v) \cap V_i|$)
of $v$'s neighbors to subset $V_i$ to make $v$ happy. The same thing is done
in step \ref{step - Growth-SoftMHV - process Lh-vertex}.

\begin{lemma}
\label{lm - lower bound on |H^new|, Growth-SoftMHV}
$|L_{u}^{new}| \leq O(\Delta^2) |H^{new}|$.
\end{lemma}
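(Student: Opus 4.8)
The plan is to mirror the proof of Lemma~\ref{lm - |L_u^new| <= Delta (Delta-2) |H^new|} for plain MHV. Algorithm {\sc Growth-SoftMHV} processes exactly three kinds of vertices in its main loop --- a $P$-vertex, an $L_h$-vertex, or (only when neither is present) an $L_u$-vertex --- so I would bound the number of $L_u^{new}$-vertices created by each kind of step separately and then charge them to the $H^{new}$-vertices that are produced. Concretely, I would establish three claims: processing a $P$-vertex creates $O(\Delta^2)$ new $L_u$-vertices; processing an $L_h$-vertex creates $O(\Delta^2)$ new $L_u$-vertices; and processing an $L_u$-vertex creates no new $L_u$-vertex. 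Since each of the first two kinds of step turns its center vertex $v$ into a happy (hence $H^{new}$) vertex, and a vertex once made happy is never recolored, summing the per-step bounds over all steps will give $|L_u^{new}| \le O(\Delta^2)\,|H^{new}|$.

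For the first two claims I would argue by a degree count. In either step the algorithm colors at most $1 + \lceil \rho\cdot\deg(v)\rceil \le 1 + \Delta = O(\Delta)$ vertices (the chosen $L_p$- or $L$-neighbors of $v$, together with $v$ itself in the $L_h$ case). A previously uncolored vertex $x$ can turn into a new $L_u$-vertex only if one of these freshly colored vertices is a neighbor of $x$ and the new color drives $x$ below its happiness threshold, i.e.\ makes $|N^u(x)| + \max_i |N_i(x)| < \rho\cdot\deg(x)$. Each freshly colored vertex has at most $\Delta$ neighbors, so the number of candidate vertices $x$ is at most $O(\Delta)\cdot\Delta = O(\Delta^2)$, which bounds the $L_u^{new}$-vertices generated in the step. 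I expect the exact constants (as in the plain-MHV lemma) to be smaller, but the $O(\Delta^2)$ form is all that is needed.

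The delicate claim is the third one, and I expect it to be the main obstacle. When an $L_u$-vertex is processed there is, by the structure of the main loop, no $P$-vertex and no $L_h$-vertex present, so every uncolored neighbor of the processed vertex $v$ is either an $L_u$-vertex or an $L_f$-vertex. The $L_u$-neighbors are already of type $L_u$ and contribute nothing new; the point is that the $L_f$-neighbors cannot become $L_u$. To see this I would use the monotonicity of the quantity $g(x) = |N^u(x)| + \max_i |N_i(x)|$: coloring a single neighbor of $x$ decreases $|N^u(x)|$ by exactly $1$, while $\max_i |N_i(x)|$ increases by at most $1$ and, for an $L_f$-vertex $x$ receiving its first colored neighbor, increases by exactly $1$. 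Hence $g(x)$ stays equal to $\deg(x) \ge \rho\cdot\deg(x)$, so such an $x$ becomes an $L_h$-vertex rather than an $L_u$-vertex. The same monotonicity shows that once $g(x) < \rho\cdot\deg(x)$ it remains below threshold, so each vertex enters $L_u^{new}$ in a unique step and the charging is well defined. With the three claims in hand, the counting argument above yields the stated bound, observing finally that {\sc Growth-SoftMHV} runs in polynomial time.
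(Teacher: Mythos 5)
Your overall strategy is exactly the paper's: bound the $L_u^{new}$-vertices generated per step in each of the three cases, observe that each $P$-step and each $L_h$-step produces at least one new $H$-vertex that never reverts, and charge. Your treatment of the third case is in fact more explicit than the paper's: the paper merely asserts that processing an $L_u$-vertex creates no new $L_u$-vertex, while your monotonicity argument for $g(x) = |N^u(x)| + \max_i |N_i(x)|$ (coloring one neighbor decreases $|N^u(x)|$ by one and increases $\max_i|N_i(x)|$ by at most one, and by exactly one for an $L_f$-vertex receiving its first colored neighbor) pins down why an $L_f$-neighbor becomes $L_h$ rather than $L_u$, and also why membership in $L_u^{new}$ is unambiguous. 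These are genuine improvements in rigor.

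However, your counting principle in the $P$-vertex case is false as stated. You claim a vertex $x$ can enter $L_u^{new}$ ``only if one of the freshly colored vertices is a neighbor of $x$ and the new color drives $x$ below its happiness threshold.'' Note that in the $P$-step the center $v$ is \emph{already} colored, so $v$ is not among the freshly colored vertices; and the definition of an $L_p$-vertex imposes no happiness-potential condition at all --- it only requires adjacency to a $P$-vertex. Hence an $L_p$-neighbor $x$ of $v$ that is left uncolored by the step, is adjacent to none of the freshly colored vertices, already satisfies $g(x) < \rho \cdot \deg(x)$, and has no other $P$-neighbor, gets reclassified from $L_p$ to $L_u$ purely because $v$ ceases to be a $P$-vertex; neither of your two conditions holds for such an $x$. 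The paper's proof accounts for precisely these vertices (``the remaining $L_p$-neighbors of $v$''), which contribute an additional $O(\Delta)$ term beyond the $\lceil \rho\Delta\rceil(\Delta-1)$ count. The repair is immediate: enlarge your candidate set to the neighbors of freshly colored vertices together with $N(v)$ itself, adding at most $\Delta$ candidates, so the per-step bound remains $O(\Delta^2)$ and the lemma follows. (In the $L_h$-step this issue does not arise, since there $v$ itself is freshly colored and its leftover $L$-neighbors are therefore already in your candidate set.)
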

\begin{proof}
Suppose Algorithm {\sc Growth-SoftMHV} is to process a $P$-vertex $v$,
which is already colored by color $i$.
When $v$ is processed, at most $\lceil \rho \Delta \rceil$ $L_p$-neighbors
of $v$ are added to $V_i$. Each of the $L_p$-neighbors has at most
$\Delta - 1$ $L_{h}$-neighbors.
In the worst case, all these $L_{h}$-neighbors,
plus the remaining $L_p$-neighbors of $v$, could become $L_{u}$-vertices
when $v$ is processed. So, at most
$\lceil \rho \Delta \rceil (\Delta-1) + (1-\alpha) \Delta = O(\Delta^2)$
$L_{u}^{new}$-vertices can be generated in this case.

Then suppose the algorithm is to process an $L_{h}$-vertex $v$. Let
$V_i$ be the vertex subset in which $v$ has the maximum colored neighbors.
When $v$ is processed, at most $\lceil \rho \Delta \rceil - 1$ $L$-neighbors
of $v$ are added to $V_i$. Each of these $L$-neighbors can have at most
$\Delta - 1$ $L_{h}$-neighbors. In the worst case,
all these $L_{h}$-neighbors, plus the remaining $L$-neighbors of $v$,
could become $L_{u}$-vertices when $v$ is processed.
So, at most
$(\lceil \rho \Delta \rceil - 1) (\Delta-1) + (1-\alpha) \Delta = O(\Delta^2)$
$L_{u}^{new}$-vertices can be generated in this case.

When the algorithm processes an $L_{u}$-vertex, there are only
$L_{u}$-vertices or $L_f$-vertices (if any) in the current graph.
So, coloring an $L_{u}$-vertex does not generate any new $L_{u}$-vertex.

By charging the number of newly generated $L_{u}$-vertices to the newly
generated $H$-vertex, we finish the proof of the lemma.
\qed
\end{proof}

\begin{theorem}
The SoftMHV problem can be approximated within a factor of
$\Omega(\Delta^{-3})$ in polynomial time.
\end{theorem}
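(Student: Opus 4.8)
The plan is to mirror, step for step, the three-part analysis already carried out for \textsc{Growth-MHV}, adapting each quantity to the soft threshold $\rho$. Lemma \ref{lm - lower bound on |H^new|, Growth-SoftMHV} has already supplied the bound $|L_{u}^{new}| \leq O(\Delta^2)|H^{new}|$, so what remains is (i) an upper bound on $OPT$ in the same shape as Lemma \ref{lm - upper bound on OPT}, and (ii) a lower bound on $|H^{new}|$ in the same shape as Lemma \ref{lm - lower bound on |H^new|}. Combining these with the identity $SOL = |H^{org}| + |H^{new}|$ then produces the claimed ratio, exactly as in the corresponding MHV theorem.

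First I would re-derive the upper bound on $OPT$. By Definition \ref{def - types of vertex - SoftMHV}, the vertices colored by $c$ partition into $H^{org}$, $P^{org}$, $U^{org}$, $L^{org}$, and $L^{org}$ splits into $L_p^{org}$, $L_{h}^{org}$, $L_{u}^{org}$, $L_f^{org}$. Every $U^{org}$-vertex satisfies $\deg(v) - |N^d(v)| < \rho\deg(v)$ and every $L_{u}^{org}$-vertex satisfies the analogous destined-unhappy inequality, so neither can ever become happy; hence in the best case only the vertices in $H^{org} \cup P^{org} \cup (L^{org}\setminus L_{u}^{org})$ are happy, giving $OPT \leq |H^{org}| + |P^{org}| + |L^{org}| - |L_{u}^{org}|$. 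A $P$-vertex $v$ has $|N^s(v)| < \rho\deg(v)$ yet $|N^s(v)| + |N^u(v)| \geq \rho\deg(v)$, which forces $|N^u(v)| > 0$, so $v$ is adjacent to an uncolored vertex, i.e.\ to an $L_p$-vertex; since each $L_p$-vertex touches at most $\Delta$ such $P$-vertices, $|P^{org}| \leq \Delta|L_p^{org}| \leq \Delta(|L^{org}| - |L_{u}^{org}|)$. This reproduces $OPT \leq |H^{org}| + (\Delta+1)(|L^{org}| - |L_{u}^{org}|)$ verbatim.

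Next I would establish the lower bound on $|H^{new}|$. The only change from the MHV case is in the counting of consumed vertices: by steps \ref{step - Growth-SoftMHV - process P-vertex} and \ref{step - Growth-SoftMHV - process Lh-vertex}, each time \textsc{Growth-SoftMHV} makes a vertex happy it colors at most $\lceil \rho\Delta \rceil + 1 = O(\Delta)$ of the $L_p$- or $L_{h}$-vertices, and no colored vertex is ever recolored. With the same accounting of $L_f$- and $L_{u}$-vertices as in Lemma \ref{lm - lower bound on |H^new|}, the number of $L$-vertices usefully colored is at least $|L^{org}| - |L_{u}^{org}| - |L_{u}^{new}|$, so $|H^{new}| \geq \frac{1}{O(\Delta)}(|L^{org}| - |L_{u}^{org}| - |L_{u}^{new}|)$. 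Substituting $|L_{u}^{new}| \leq O(\Delta^2)|H^{new}|$ from Lemma \ref{lm - lower bound on |H^new|, Growth-SoftMHV} and collecting the $|H^{new}|$ terms yields $\bigl(O(\Delta) + O(\Delta^2)\bigr)|H^{new}| \geq |L^{org}| - |L_{u}^{org}|$, hence $|H^{new}| \geq \Omega(\Delta^{-2})(|L^{org}| - |L_{u}^{org}|)$.

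Finally I would chain the two bounds exactly as in the MHV theorem, writing
\begin{eqnarray}
SOL &=& |H^{org}| + |H^{new}| \nonumber \\
&\geq& |H^{org}| + \Omega(\Delta^{-2})\bigl(|L^{org}| - |L_{u}^{org}|\bigr) \nonumber \\
&\geq& \Omega(\Delta^{-3})\Bigl(|H^{org}| + (\Delta+1)\bigl(|L^{org}| - |L_{u}^{org}|\bigr)\Bigr) \nonumber \\
&\geq& \Omega(\Delta^{-3})\, OPT, \nonumber
\end{eqnarray}
where the second inequality uses the $|H^{new}|$ lower bound and the last uses the $OPT$ upper bound (the middle step holds because the coefficient of $|H^{org}|$ only shrinks and $\Omega(\Delta^{-2})$ dominates $\Omega(\Delta^{-3})\cdot(\Delta+1)$). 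The main obstacle is purely the bookkeeping of the $\rho$-dependent constants: the soft threshold inflates the per-step consumption to $\Theta(\rho\Delta)$ and the $L_{u}^{new}$ count to $\Theta(\Delta^2)$, so I must check that these two factors combine to exactly $\Theta(\Delta^2)$ in the $|H^{new}|$ bound — keeping the final ratio at $\Omega(\Delta^{-3})$ rather than degrading it — and verify that $\rho < 1$ leaves every ``destined-unhappy'' classification (and hence the $OPT$ bound) intact.
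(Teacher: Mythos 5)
Your proposal is correct and follows essentially the same route as the paper's proof: the same upper bound $OPT \leq |H^{org}| + (\Delta+1)(|L^{org}| - |L_{u}^{org}|)$ derived exactly as in Lemma \ref{lm - upper bound on OPT}, the same per-$H$-vertex consumption count of $O(\Delta)$ $L$-vertices combined with Lemma \ref{lm - lower bound on |H^new|, Growth-SoftMHV} to get $|H^{new}| \geq \Omega(\Delta^{-2})(|L^{org}| - |L_{u}^{org}|)$, and the same final chaining via $SOL = |H^{org}| + |H^{new}|$. Your accounting of $\lceil\rho\Delta\rceil + 1$ consumed vertices per happy vertex is in fact slightly more careful than the paper's $\lceil\rho\Delta\rceil$ (which omits the $L_h$-vertex itself), but both give $O(\Delta)$ and the identical $\Omega(\Delta^{-3})$ ratio.
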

\begin{proof}
Each time an $H$-vertex is generated, at most $\lceil \rho \Delta \rceil$
$L$-vertices are consumed (i.e., colored).
So, for the number of newly generated $H$-vertices we have
$|H^{new}| \geq (|L^{org}| - |L_{u}^{org}| - |L_{u}^{new}|) / \lceil \rho \Delta \rceil$.
By Lemma \ref{lm - lower bound on |H^new|, Growth-SoftMHV}, we get
\begin{equation}
|H^{new}| \geq \frac{|L^{org}| - |L_{u}^{org}|}{O(\Delta^2)}.
\nonumber
\end{equation}

Let $OPT$ be the number of happy vertices in an optimal solution to
the problem. By the same reason as in Lemma \ref{lm - upper bound on OPT},
we obtain
\begin{eqnarray}
OPT &\leq& |H^{org}| + |P^{org}| + |L^{org}| - |L_{u}^{org}| \nonumber \\
&\leq& |H^{org}| + \Delta |L_p^{org}| + |L^{org}| - |L_{u}^{org}| \nonumber \\
&\leq& |H^{org}| + (\Delta+1) (|L^{org}| - |L_{u}^{org}|). \nonumber
\end{eqnarray}

Let $SOL$ be the number of happy vertices found by Algorithm
{\sc Growth-SoftMHV}. Then we have
\begin{eqnarray}
SOL &=& |H^{org}| + |H^{new}| \nonumber \\
&\geq& |H^{org}| + \frac{1}{O(\Delta^2)}\Bigl(|L^{org}| - |L_{u}^{org}|\Bigr)
\nonumber \\
&\geq& \frac{1}{O(\Delta^3)}\Bigl(|H^{org}| + \Delta(|L^{org}| - |L_{u}^{org}|)\Bigr)
\nonumber \\
&=& \Omega(\Delta^{-3}) OPT. \nonumber
\end{eqnarray}

Finally, notice that Algorithm {\sc Growth-SoftMHV} obviously runs
in polynomial time. This gives the theorem.
\qed
\end{proof}

\subsection{NP-Hardness of SoftMHV}
\begin{theorem}
For any real number $0 < \rho < 1$, there exist infinitely many integers
$k \geq 3$, such that the corresponding SoftMHV problem is NP-hard.
\end{theorem}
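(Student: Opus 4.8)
The plan is to reduce from the $3$-terminal Multiway Cut problem, or equivalently from the $3$-MHE problem already shown NP-hard in Theorem~\ref{th - 3-MHE is NP-hard}, and to convert its \emph{edge}-based objective into the \emph{vertex}-based objective of SoftMHV. Fix $\rho$ once and for all. Given a $3$-MHE instance $(H,c)$ with terminals colored $1,2,3$, I would build a SoftMHV instance $G$ in which each edge $(a,b)\in E(H)$ is replaced by a private ``test gadget'' centered at a new uncolored vertex $y_{ab}$ adjacent to $a$ and $b$, designed so that $y_{ab}$ is happy under threshold $\rho$ \emph{if and only if} $c(a)=c(b)$ (with the common color lying in $\{1,\ldots,r\}$). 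If this is achieved, the test vertices contribute exactly the number of happy edges of $(H,c)$, so that an optimal SoftMHV coloring corresponds to an optimal Multiway Cut.

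The core of the gadget is forcing the ceiling $\lceil\rho\deg(y_{ab})\rceil$ to equal the right integer. For $\rho\in(\tfrac12,1)$ a plain subdivision already works: $\deg(y_{ab})=2$ and $\lceil 2\rho\rceil=2$, so $y_{ab}$ is happy iff both neighbors match. For smaller $\rho$ I would attach to $y_{ab}$ a balanced set of pre-colored pendants consisting of exactly $f$ copies of each of $r$ colors, giving $\deg(y_{ab})=rf+2$; since any color $y_{ab}$ may take is matched by precisely $f$ of these pendants, $y_{ab}$ is happy iff $\lceil\rho(rf+2)\rceil=f+2$, i.e.\ iff $f(\rho r-1)\in(1-2\rho,\,2-2\rho]$. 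Because this target interval has length one, one checks by elementary casework that suitable constants $r\ge 3$ and $f\ge 0$ (depending only on $\rho$) always exist: e.g.\ $f=0$ for $\rho>\tfrac12$, and $f=1$ with $r=3$ for $\rho\in(\tfrac25,\tfrac12]$, larger $r$ handling the remaining small $\rho$. Crucially, the balance is taken over only the constant number $r$ of colors carrying terminals, not over all $k$ colors, so $r$ and $f$ do not grow with $k$.

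The remaining, and main, obstacle is that the happy-\emph{vertex} count must equal the happy-edge count up to an additive constant, which forces me to neutralize every vertex other than the test vertices. The terminals and the original vertices of $H$ have variable happiness, so I would attach to each of them enough ``poison'' gadgets to make it permanently unhappy: a poison of color $\gamma$ is a vertex $w$ joined to the target together with $q=\lceil\rho/(1-\rho)\rceil$ private same-colored pendants, so that $w$ and its pendants are \emph{always} happy (contributing a fixed amount) while $w$ still acts as a mismatched neighbor; using $b=\lceil 1/\rho\rceil+1$ distinct poison colors guarantees that no coloring lets a target reach its threshold. Finally, each surplus color $j>r+b$ is realized by a disjoint monochromatic edge colored $j$ (two happy vertices, fixed), so every $k\ge r+b$ is legitimately an instance size. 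Collecting terms, the total number of happy vertices equals the number of happy edges of $(H,c)$ plus a polynomial-time-computable constant, which yields the equivalence and hence NP-hardness for all $k\ge r+b$, i.e.\ for infinitely many $k\ge 3$. The delicate points to get right are the integrality argument making the ceiling land on $f+2$ (this is exactly why only infinitely many, rather than all, $k$ come out directly) and the verification that an optimal coloring never profits from using colors outside $\{1,\ldots,r\}$ on the original vertices.
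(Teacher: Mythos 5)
Your overall strategy is the paper's strategy: reduce from 3-MHE, subdivide each edge by an uncolored test vertex whose happiness threshold is tuned with pre-colored pendant ballast, neutralize every original vertex with attached pre-colored gadgets, and show the happy-vertex count equals the happy-edge count plus a fixed, computable constant. The gadget parameters differ --- the paper balances the test vertex's ballast over \emph{all} $k$ colors (one $z$-vertex per color, plus $h \approx \rho k/(1-\rho)$ uncolored pendants chosen from an interval of length $\frac{1}{1-\rho} > 1$) and neutralizes each original vertex with $\Delta$ pendant copies of each of the $k$ colors, requiring $k \geq \max\{4/\rho - 3,\, 2/\rho,\, 3\}$; you instead keep a constant palette of $r$ colors and pad surplus colors with dummy gadgets --- but the method is the same.

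Two steps, however, do not hold up as written. First, the neutralization claim fails quantitatively: each poison contributes only one potentially mismatched neighbor, so with a single poison per each of $b=\lceil 1/\rho\rceil+1$ colors, a target $v$ of degree $d$ in $H$ has $\deg_{G'}(v)=d+b$ and can have up to $d+1$ same-colored neighbors (all its uncolored test neighbors may adopt its color). For $\rho=\tfrac12$, $b=3$, $d=10$ this gives $11 \geq \rho(d+b) = 6.5$, so $v$ reaches its threshold, contradicting your stated guarantee. The multiplicity must scale with the degree: you need $t > d(1-\rho)/(\rho b - 1) = \Theta(\Delta)$ poisons of \emph{each} of the $b$ colors, balanced --- which is precisely what the paper's $\Delta$ copies of each of $k \geq 2/\rho$ colors accomplishes. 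Your word ``enough'' gestures at this, but the claim that the $b$ distinct colors alone guarantee unhappiness is false. Second, because your test-vertex ballast is balanced over only $r < k$ colors, the number of happy pendants at $y_{ab}$ is $f$ when $y_{ab}$ receives a palette color and $0$ otherwise, so your ``fixed constant'' accounting is not automatic; you need the exchange lemma you only flag as delicate: recoloring any test vertex into $\{1,\dots,r\}$ never decreases the count (it gains $f$ happy pendants and loses nothing, since $y_{ab}$ colored outside the palette has at most $2 < \rho(rf+2)$ matches when $f \geq 1$), combined with the recolor-down-to-color-$1$ step for original vertices that the paper also uses. Both repairs are routine, and with them your construction works and even yields NP-hardness for every $k \geq r+b$; note that the paper avoids both issues structurally, by giving the test vertex one pre-colored neighbor of \emph{every} color so that exactly one always matches, independently of the test vertex's color.
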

\begin{proof}
Reduce from 3-MHE. Let $(G, c)$ be an instance of 3-MHE, and $\rho$ be any
real constant in $(0, 1)$. We shall construct a SoftMHV instance
$(G', c')$ in the following, in which the color number $k \geq 3$ is an integer
that depends only on $\rho$. The value of $k$ will be given later.

Let $h$ be an integer constant depending on $\rho$ and $k$, which will be
fixed later.
For every edge $(u, v) \in E(G)$, add $h+k+1$ vertices $x_{uv}$ (called
$x$-vertex), $y_1^{uv}$, $y_2^{uv}$, $\cdots$, $y_h^{uv}$ (called $y$-vertices),
$z_1^{uv}$, $z_2^{uv}$, $\cdots$, $z_q^{uv}$ (called $z$-vertices).
Replace edge $(u, v)$ by two consecutive edges $(u, x_{uv})$
and $(x_{uv}, v)$. For each vertex $a \in \{y_1^{uv}, \cdots, y_h^{uv},$
$z_1^{uv}, \cdots, z_k^{uv} \}$, connect it to $x_{uv}$ via an edge
$(a, x_{uv})$.
For $1 \leq i \leq k$, vertex $z_i^{uv}$ has a pre-specified color $i$.

For every vertex $v \in V(G)$, add $\Delta \cdot k$ vertices $w_{1,1}^v$,
$w_{1,2}^v$, $\cdots$, $w_{1,\Delta}^v$, $w_{2,1}^v$, $w_{2,2}^v$, $\cdots$,
$w_{2, \Delta}^v$, $\cdots$, $w_{k,1}^v$, $w_{k,2}^v$, $\cdots$,
$w_{k,\Delta}^v$ (called $w$-vertices), where $\Delta$ is the maximum vertex
degree of $G$.
For each $1 \leq i \leq q$ and each $1 \leq j \leq \Delta$, connect vertex
$w_{i,j}^v$ to $v$ via an edge $(w_{i, j}^v, v)$. Vertex $w_{i,j}^v$ is
colored in advance by color $i$, $1 \leq i \leq k$, $1 \leq j \leq \Delta$.
This is our graph $G'$ in the new instance of SoftMHV.

Next we determine constants $h$ and $k$.
To enable the reduction to work, $h$ and $k$ should satisfy
\begin{eqnarray}
h + 3 &\geq& \rho (h + k + 2), \label{eqn - h + 3 >= rho(h + k + 2)}\\
h + 2 &<& \rho (h + k + 2). \label{eqn - h + 2 < rho (h + k + 2)}
\end{eqnarray}
Let $(u, v)$ be any edge in $G$. Consider vertex $x_{uv}$ in $G'$.
No matter how $x_{uv}$ is colored (recall that the color set is
$\{1, 2, \cdots, k\}$), there is exactly one vertex in
$\{z_1^{uv}, \cdots, z_q^{uv}\}$ having the same color as that of $x_{uv}$.
Note that $\deg_{G'}(x_{uv}) = h + k + 2$.
So, inequality (\ref{eqn - h + 3 >= rho(h + k + 2)}) guarantees that
if all vertices in $\{u, v, y_1^{uv}, \cdots, y_2^{uv}\}$ have the same
color as that of $x_{uv}$, $x_{uv}$ will be happy, and,
inequality (\ref{eqn - h + 2 < rho (h + k + 2)}) guarantees that
if there is one vertex in $\{u, v, y_1^{uv}, \cdots, y_2^{uv}\}$ having
different color to that of $x_{uv}$, $x_{uv}$ will be unhappy.

By inequality (\ref{eqn - h + 3 >= rho(h + k + 2)}) and
inequality (\ref{eqn - h + 2 < rho (h + k + 2)}), the value of integer $h$
should satisfy
\begin{equation}
\label{eqn - interval of h}
h \in \Bigl [ \frac{\rho k + 2\rho - 3}{1 - \rho}, \frac{\rho k + 2\rho - 2}{1 - \rho} \Bigr).
\end{equation}
Since $\rho k + 2\rho - 2 = (\rho k + 2\rho - 3) + 1$ and $1 - \rho < 1$,
there must be at least one integer in the interval of
(\ref{eqn - interval of h}).

Of course, the left end of the interval of (\ref{eqn - interval of h})
should be at least 1. This gives
\begin{equation}
\label{eqn - k >= 4/rho - 3}
k \geq \frac4\rho - 3.
\end{equation}

For each vertex $v \in V(G')$ that comes from $G$, we want to guarantee that
no matter how the vertices in $G'$ are colored, $v$ will never be happy.
Note that no matter what color vertex $v$ is colored by,
there are exactly $\Delta$ vertices in
$\{w_{i,j}^v \colon 1 \leq i \leq k, 1 \leq j \leq \Delta \}$ having the
same color as that of $v$.
Since $\deg_G(v) \leq \Delta$ and $\deg_{G'}(v) \geq k\Delta + 1$,
to make vertex $v$ unsatisfiable, we just need $2\Delta / (k \Delta + 1) < \rho$.
Since $2\Delta / (k \Delta + 1) < 2\Delta / (k \Delta)$, this will be
guaranteed by letting
\begin{equation}
\label{eqn - k >= 2/rho}
k \geq \frac2\rho.
\end{equation}

Since we start our reduction from the 3-MHE problem, naturally we need
\begin{equation}
\label{eqn - k >= 3}
k \geq 3.
\end{equation}

By inequalities (\ref{eqn - k >= 4/rho - 3}), (\ref{eqn - k >= 2/rho}) and
(\ref{eqn - k >= 3}), we can set $k$ as {\em any} integer such that
\begin{equation}
k \geq \max \Bigl \{\frac4\rho - 3, \frac2\rho, 3 \Bigr\}. \nonumber
\end{equation}
Once $k$ is fixed, we can fix $h$ according to (\ref{eqn - interval of h}).

We have completed our new instance $(G', c')$ of SoftMHV.

Let $m = |E(G)|$ and $n = |V(G)|$.
Denote by $m^*$ the number of happy edges in an optimal solution to
the 3-MHE instance $(G, c)$, and by $n^*$ the number of happy vertices
in an optimal solution to the SoftMHV instance $(G', c')$.
We shall prove the following claim, which will finish the proof of the
theorem.

\begin{claim}
$m^* \geq m_0 \Longleftrightarrow n^* \geq \Delta n + (h+1)m + m_0$.
\end{claim}
\begin{proof}
($\Longrightarrow$)
Let $c^*$ be an optimal solution to instance $(G, c)$.
First we color every vertex $v \in V(G')$ such that $v$ is also in $G$
by color $c^*(v)$. For each edge $(u, v) \in E(G)$, color $x_{uv}$
by color $c^*(u)$, (actually coloring $x_{uv}$ by either $c^*(u)$ or
$c^*(v)$ is ok.) and color all vertices $y_1^{uv}$, $\cdots$, $y_h^{uv}$
by the color of $x_{uv}$. This is our vertex coloring for instance $(G', c')$.

By similar arguments before inequality
(\ref{eqn - k >= 2/rho}), for every vertex $v \in V(G) \cap V(G')$ and its
corresponding $w$-vertices in $G'$, $v$ itself is unhappy and
there are exactly $\Delta$ happy vertices in
$\{w_{i,j}^v \colon 1 \leq i \leq k, 1 \leq j \leq \Delta\}$.
So we obtain $\Delta n$ happy vertices from all $w$-vertices in $G'$.

Let $(u, v)$ be an edge in $G$. In its corresponding
$y$-vertices $\{y_1^{uv}, \cdots, y_h^{uv}\}$ and
$z$-vertices $\{z_1^{uv}, \cdots, z_k^{uv}\}$,
there are exactly $h + 1$ vertices that are happy by our coloring.
So we obtain $(h + 1) m$ happy vertices from all the $y$-vertices
and $z$-vertices in $G'$.

Next let us consider vertex $x_{uv}$. If $(u, v)$ is happy by $c^*$,
then $x_{uv}$ has $h + 3$ neighbors having the same color as that of $x_{uv}$.
So the fraction of happy neighbors of $x_{uv}$ is
\begin{equation}
\frac{h+3}{h+k+2}
\geq
\frac{\frac{\rho k + 2\rho -3}{1 - \rho}+3}{\frac{\rho k + 2\rho -3}{1 - \rho} + k + 2}
= \rho, \nonumber
\end{equation}
where the first inequality is due to $h \geq \frac{\rho k + 2\rho -3}{1 - \rho}$
(by inequality (\ref{eqn - h + 3 >= rho(h + k + 2)})),
and hence $x_{uv}$ is happy.
Since $m^* \geq m_0$, we can obtain $\geq m_0$ happy vertices from
all $x$-vertices in $G'$.

Summing all, the number of happy vertices in $G'$ by our coloring
is at least $\Delta n + (h+1)m + m_0$.

($\Longleftarrow$)
Let $c'^*$ be an optimal solution to instance $(G', c')$ of SoftMHV.
By the arguments before inequality (\ref{eqn - k >= 2/rho}),
every vertex $v \in V(G') \cap V(G)$ is unhappy by $c'^*$,
and there are exactly $\Delta n$ happy $w$-vertices by $c'^*$.

Let $(u, v)$ be any edge in $G$. Since $c'^*$ is an optimal coloring,
we can assume that all vertices $y_1^{uv}$, $\cdots$, $y_h^{uv}$ have
color $c'^*(x_{uv})$. Taking into account the one more happy vertex
$z_i^{uv}$ (where $i = c'^*(x_{uv})$) for each $(u, v) \in E(G)$,
there are exactly $(h + 1)m$ happy vertices by $c'^*$ from all
$y$-vertices and $z$-vertices.

Now only $x$-vertices in $G'$ remain unconsidered.
Since $n^* \geq \Delta n + (h+1)m + m_0$, there must be at least $m_0$
happy $x$-vertices. Let $x_{uv}$ be any such vertex.
Since $x_{uv}$ is happy, the number of neighbors of $x_{uv}$
that have the color as that of $x_{uv}$ is at least
\begin{equation}
\rho (h + k + 2)
> \rho \Bigl( h + \frac{h - \rho h - 2 \rho + 2}{\rho} + 2 \Bigr)
= h + 2, \nonumber
\end{equation}
where the first inequality is due to inequality
(\ref{eqn - h + 2 < rho (h + k + 2)}).
This shows that the number of neighbors of $x_{uv}$ having color
$c'^*(x_{uv})$ is at least $h + 3$. So, vertices $u$ and $v$ must have
the {\em same} color (as that of $x_{uv}$).

Let us color every vertex $v \in V(G)$ by color $c'^*(v)$.
If there are vertices in $G$ colored by colors in $\{4, 5, \cdots, k\}$,
then color all of them by color 1 (note that $G$ is part of the instance of
the 3-MHE problem). This will never decrease the number of happy edges
in $G$. By the above analysis, the number of happy edges in $G$ is
at least $m_0$.
\qed
\end{proof}

The proof of the theorem is finished.
\qed
\end{proof}

\section{MHV with Hard Threshold}
In the hard-threshold variant of the $k$-MHV problem (HardMHV for short),
a vertex $v$ is happy if $|N^s(v)| \geq q$, where $q$ is an input
parameter.
Given a connected undirected graph $G$, a partial coloring function $c$,
and an integer $q > 0$, the HardMHV problem asks for a total vertex coloring
extended from $c$ that maximizes the number of happy vertices.
It is reasonable to assume $q \leq \Delta$, since otherwise
there is no feasible solution to the problem.

\subsection{Algorithm for HardMHV}
The following type definition of vertices is similar to
Definition \ref{def - types of vertex - SoftMHV}.

\begin{definition}[Types of vertex in HardMHV]
Fix a (partialor total) vertex coloring. Let $v$ be a vertex. Then,
\begin{enumerate}
\item $v$ is an $H$-vertex if $v$ is colored and happy,

\item $v$ is a $U$-vertex if
\begin{enumerate}
\item $v$ is colored, and
\item $v$ is destined to be unhappy (i.e., $\deg(v) - |N^d(v)| < q$),
\end{enumerate}

\item $v$ is a $P$-vertex if
\begin{enumerate}
\item $v$ is colored,
\item $v$ has not been happy (that is, $|N^s(v)| < q$), and
\item $v$ can become happy (i.e., $|N^s(v)| + |N^u(v)| \geq q$),
\end{enumerate}

\item $v$ is an $L$-vertex if $v$ has not been colored.
\end{enumerate}
\end{definition}

Similar as the case of SoftMHV, Algorithm {\sc Greedy-MHV} is also
a $1/k$-approximation algorithm for the HardMHV problem. To prove this
we only need to define $L_P$ in Theorem \ref{th - 1/k-approximation for MHV}
as the set of uncolored vertices $v$ such that
$|N^u(v)| + \max \{|N_i(v)|\} \geq q$, and $L_D = L - L_P$.

\begin{theorem}
There is a $1/k$-approximation algorithm for the HardMHV problem. \qed
\end{theorem}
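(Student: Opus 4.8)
The plan is to show that the same Algorithm {\sc Greedy-MHV} analyzed in Theorem \ref{th - 1/k-approximation for MHV} already attains ratio $1/k$ for HardMHV, by reusing that proof's averaging argument with the vertex types reinterpreted under the hard threshold $q$. Following the hint, I would fix the partial coloring $c$ and partition $V(G)$ into $(H, P, U, L_P, L_D)$, where $H$, $P$, $U$ are the HardMHV types just defined, $L_P$ is the set of uncolored vertices $v$ with $|N^u(v)| + \max_i |N_i(v)| \geq q$, and $L_D = L - L_P$. The point of this split is that $L_P$ collects exactly the uncolored vertices that still have a chance of becoming happy: assigning $v$ the color $i^\ast$ that maximizes $|N_i(v)|$ and coloring all of $N^u(v)$ with $i^\ast$ gives $|N_{i^\ast}(v)| + |N^u(v)| \geq q$ neighbors of $v$'s own color.

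The first key step is the upper bound $OPT \leq |H| + |P| + |L_P|$. Here I would verify that no $U$-vertex and no $L_D$-vertex can ever be happy in any total coloring: for a $U$-vertex the largest achievable same-color count is $|N^s(v)| + |N^u(v)| = \deg(v) - |N^d(v)| < q$, and for an $L_D$-vertex any color $i$ yields at most $|N_i(v)| + |N^u(v)| \leq \max_i |N_i(v)| + |N^u(v)| < q$. Hence the happy vertices of any solution are confined to $H \cup P \cup L_P$, which gives the bound.

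The second key step mirrors the original $SOL_i$ analysis. Let $SOL_i$ be the number of happy vertices produced when {\sc Greedy-MHV} colors all uncolored vertices by color $i$. I would argue $|H| + |P| + |L_P| \leq \sum_{i=1}^k SOL_i$ by showing every vertex of $H \cup P \cup L_P$ is happy in at least one $SOL_i$: an $H$-vertex already has $\geq q$ colored neighbors of its own color and remains happy in $SOL_{c(v)}$; a $P$-vertex becomes happy in $SOL_{c(v)}$ because all of $N^u(v)$ then receive its own color, pushing its count to $|N^s(v)| + |N^u(v)| \geq q$; and an $L_P$-vertex becomes happy in $SOL_{i^\ast}$ for the maximizing color $i^\ast$. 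Combining, $SOL = \max_i SOL_i \geq \tfrac1k \sum_i SOL_i \geq \tfrac1k (|H| + |P| + |L_P|) \geq \tfrac1k\, OPT$, and since {\sc Greedy-MHV} clearly runs in polynomial time the theorem follows.

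I expect no genuine obstacle here, as the structure is identical to Theorem \ref{th - 1/k-approximation for MHV}; the only care needed is the bookkeeping that the hard-threshold inequalities $|N^s(v)| + |N^u(v)| < q$ and $\max_i |N_i(v)| + |N^u(v)| < q$ really certify ``destined to be unhappy'' for $U$- and $L_D$-vertices, and that the monochromatic solutions witness happiness for the three remaining classes. These are exactly the places where the threshold $q$ replaces the ``$|N^s(v)| = \deg(v)$'' condition of plain MHV.
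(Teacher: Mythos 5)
Your proposal is correct and follows essentially the same route as the paper: the paper proves this theorem precisely by reusing Algorithm {\sc Greedy-MHV} and the averaging analysis of Theorem \ref{th - 1/k-approximation for MHV}, with $L_P$ redefined as the set of uncolored vertices $v$ satisfying $|N^u(v)| + \max_i \{|N_i(v)|\} \geq q$ and $L_D = L - L_P$, exactly as you do. You have simply made explicit the verifications the paper leaves implicit, namely that $U$- and $L_D$-vertices can never become happy under the hard threshold and that every vertex of $H \cup P \cup L_P$ is happy in at least one of the $k$ monochromatic completions.
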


In the MHV and SoftMHV problems, for an $L$-vertex $v$, if $|N^d (v)|$ is
too large, then $v$ may be destined to be unhappy.
In contrast, in the HardMHV problem, an $L$-vertex $v$ may be destined to
be unhappy even if $|N^d (v)| = 0$: This will happen when $\deg(v) < q$.
Based on this observation, the $L$-vertex type is divided into the following
four subtypes.

\begin{definition}[Subtypes of $L$-vertex in HardMHV]
\label{def - subtypes of L-vertex - HardMHV}
Let vertex $v$ be an $L$-vertex in a vertex coloring. Then,
\begin{enumerate}
\item $v$ is an $L_p$-vertex if $v$ is adjacent to a $P$-vertex,

\item $v$ is an $L_{h}$-vertex if
\begin{enumerate}
\item $v$ is not adjacent to any $P$-vertex,
\item $v$ is adjacent to an $H$-vertex or a $U$-vertex, and
\item $v$ can become happy (i.e.,
$|N^u(v)| + \max \{|N_i(v)| \colon 1 \leq i \leq k \} \geq q$),
\end{enumerate}

\item $v$ is an $L_{u}$-vertex if
\begin{enumerate}
\item $v$ is not adjacent to any $P$-vertex, and
\item $v$ is destined to be unhappy (i.e.,
$|N^u(v)| + \max \{|N_i(v)| \colon 1 \leq i \leq k \} < q$),
\end{enumerate}

\item $v$ is an $L_f$-vertex if
\begin{enumerate}
\item $v$ is not adjacent to any colored vertex, and
\item $v$ can become happy.
\end{enumerate}
\end{enumerate}
\end{definition}

One can verify that the subtypes in
Definition \ref{def - subtypes of L-vertex - HardMHV}
really form a partition of all $L$-vertices. Note that the $L_{u}$-vertex
not only refers to the destined-to-be-unhappy $L$-vertex that is adjacent
to an $H$-vertex or a $U$-vertex (like the $L_{u}$-vertex in MHV and
the $L_{u}$-vertex in SoftMHV), but also refers to the
destined-to-be-unhappy $L$-vertex that is not adjacent to any colored vertex,
as discussed before Definition \ref{def - subtypes of L-vertex - HardMHV}.

Below is the subset-growth approximation algorithm {\sc Growth-HardMHV}
for the HardMHV problem.

\setcounter{algleo}{0}
\begin{algleo}
\linonumber {\bf Algorithm} {\sc Growth-HardMHV}
\linonumber {\em Input:} A connected undirected graph $G$, a partial
    coloring function $c$, and an integer $q > 0$.
\linonumber {\em Output:} A total vertex coloring for $G$.

\li $\forall 1 \leq i \leq k$, $V_i \leftarrow \{v \colon c(v) = i\}$.
\li {\bf while} there exist $L$-vertices {\bf do}
\begin{algleo}
    \li {\bf if} there exists a $P$-vertex $v$ {\bf then}
        \begin{algleo}
            \li $i \leftarrow c(v)$.
            \li Add its any $q - |N^s(v) \cap V_i|$ $L_p$-neighbors to $V_i$.
                The types of all affected vertices (including $v$ and vertices
                in $N^2(v)$) are changed accordingly.
        \end{algleo}
    \li {\bf elseif} there exists an $L_{h}$-vertex $v$ {\bf then}
        \begin{algleo}
            \li Let $V_i$ be the vertex subset in which $v$ has the maximum
                colored neighbors.
            \li Add vertex $v$ and its any $q - |N^s(v) \cap V_i|$ $L$-neighbors
                to $V_i$.
                The types of all affected vertices (including $v$ and vertices
                in $N^2(v)$) are changed accordingly.
        \end{algleo}
    \li {\bf else}
        \begin{algleo}
            \linonumber {\em Comment:} There must be an $L_{u}$-vertex.
            \li Let $v$ be any $L_{u}$-vertex. If $v$ has colored neighbors,
                then let $V_i$ be any vertex subset containing a colored neighbor of $v$.
                Otherwise let $V_i$ be $V_1$.
            \li Add vertex $v$ to subset $V_i$.
                The types of all affected vertices (including $v$ and vertices in $N(v)$)
                are changed accordingly.
        \end{algleo}
    \li {\bf endif}
\end{algleo}
\li {\bf endwhile}
\li {\bf return} the vertex coloring $(V_1, V_2, \cdots, V_k)$.
\end{algleo}

\begin{lemma}
\label{lm - lower bound on |H^new|, Growth-HardMHV}
$|L_{u}^{new}| \leq O(\Delta^2) |H^{new}|$.
\end{lemma}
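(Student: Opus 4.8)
The plan is to reproduce, almost verbatim, the three-case charging argument of Lemma \ref{lm - lower bound on |H^new|, Growth-SoftMHV}, with the soft threshold $\lceil \rho \Delta \rceil$ replaced throughout by the hard threshold $q$ and using the standing assumption $q \le \Delta$. Algorithm {\sc Growth-HardMHV} colors vertices only while processing a $P$-vertex, an $L_{h}$-vertex, or an $L_{u}$-vertex, so I would bound, for each of these three events, the number of $L_{u}^{new}$-vertices it can create, and then charge that number to an $H$-vertex that the same event produces.

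First I would treat the $P$-vertex case. When a $P$-vertex $v$ (already colored $i$) is processed, the algorithm colors at most $q - |N^s(v)\cap V_i| \le q \le \Delta$ of its $L_p$-neighbors with color $i$. Fix one such neighbor $w$. Coloring $w$ can turn a vertex $x$ from $L_{h}$ into $L_{u}$ only if $x \in N(w)$, since the deciding quantity $|N^u(x)| + \max_i |N_i(x)|$ changes only for neighbors of a newly colored vertex. As $w$ has at most $\Delta - 1$ neighbors besides $v$, at most $\Delta - 1$ such $x$ exist per $w$; adding the at most $\Delta$ remaining $L_p$-neighbors of $v$ that may lose their $L_p$-status once $v$ becomes an $H$-vertex and drop to $L_{u}$, this gives at most $\Delta(\Delta-1) + \Delta = O(\Delta^2)$ new $L_{u}$-vertices. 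The $L_{h}$-vertex case is handled identically, with $v$ contributing at most $q - |N^s(v)\cap V_i| \le \Delta$ freshly colored $L$-neighbors, again yielding $O(\Delta^2)$.

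The step I expect to require the most care is the third case, showing that processing an $L_{u}$-vertex generates no new $L_{u}$-vertex, because HardMHV admits a failure mode absent from MHV and SoftMHV: by Definition \ref{def - subtypes of L-vertex - HardMHV} a vertex can be destined to be unhappy merely because $\deg(v) < q$, with no differently-colored neighbor at all. At the moment an $L_{u}$-vertex is processed there is no $P$- or $L_{h}$-vertex, so every uncolored neighbor of the processed vertex $v$ is either $L_{u}$ or $L_f$. An existing $L_{u}$-neighbor stays $L_{u}$, since coloring $v$ decreases $|N^u(\cdot)|$ by one while raising $\max_i|N_i(\cdot)|$ by at most one, so the quantity cannot climb back above $q$. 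For an $L_f$-neighbor $w$ I would check that $w$ becomes $L_{h}$, not $L_{u}$: being $L_f$ forces $\deg(w) \ge q$, and after $v$ is colored

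\begin{equation}
|N^u(w)| + \max_i |N_i(w)| = (\deg(w) - 1) + 1 = \deg(w) \ge q, \nonumber
\end{equation}

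so $w$ can still become happy and is therefore $L_{h}$. Hence no $L_{u}^{new}$-vertex is created in this case.

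Finally, since each processing of a $P$-vertex or an $L_{h}$-vertex turns at least one vertex into an $H$-vertex (this is precisely how the algorithm makes progress toward termination), I would charge the $O(\Delta^2)$ newly created $L_{u}$-vertices of each such step to the $H$-vertex it produces. Summing these charges over all iterations then yields $|L_{u}^{new}| \le O(\Delta^2)\,|H^{new}|$, as claimed.
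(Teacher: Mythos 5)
Your proof is correct and takes essentially the same route as the paper's: the paper likewise proves this lemma by transplanting the three-case $O(\Delta^2)$ charging argument of Lemma \ref{lm - lower bound on |H^new|, Growth-SoftMHV}, with the threshold $q \le \Delta$ in place of $\lceil \rho \Delta \rceil$, and charging each batch of new $L_{u}$-vertices to the $H$-vertex produced in the same step. The single point the paper flags as needing attention---that since only \emph{one} vertex is colored when an $L_{u}$-vertex is processed, an $L_f$-neighbor $w$ retains $|N^u(w)| + \max_i |N_i(w)| = (\deg(w)-1)+1 = \deg(w) \ge q$ and therefore becomes an $L_{h}$-vertex rather than an $L_{u}$-vertex---is exactly the step you isolated and verified explicitly, so no new $L_{u}$-vertex arises in that case, just as the paper asserts.
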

\begin{proof}
The proof of the lemma is similar to that of
Lemma \ref{lm - lower bound on |H^new|, Growth-SoftMHV}.
Only one point needs to pay attention.
When the algorithm processes an $L_{u}$-vertex, there are only
$L_{u}$-vertices or $L_f$-vertices (if any) in the current graph.
Each time Algorithm {\sc Growth-HardMHV} processes an $L_{u}$-vertex $v$,
it processes only {\em one} such vertex. So, if $v$ has an $L_f$-neighbor
$u$, $u$ will become an $L_{h}$-vertex after the processing.
This means that coloring an $L_{u}$-vertex does not generate any
new $L_{u}$-vertex. We omit the other details of the proof.
\qed
\end{proof}

\begin{theorem}
The HardMHV problem can be approximated within a factor of
$\Omega(\Delta^{-3})$ in polynomial time.
\end{theorem}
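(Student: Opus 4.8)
The plan is to reuse, almost verbatim, the argument already carried out for {\sc Growth-SoftMHV}, since the two algorithms differ only in replacing the soft threshold $\lceil\rho\deg(v)\rceil$ by the hard threshold $q$. I would assemble three ingredients: a lower bound on $|H^{new}|$, the charging bound of Lemma \ref{lm - lower bound on |H^new|, Growth-HardMHV}, and an upper bound on $OPT$, and then chain them exactly as in the SoftMHV theorem.

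First I would bound the number of $L$-vertices consumed per newly created happy vertex. Whenever {\sc Growth-HardMHV} processes a $P$-vertex or an $L_h$-vertex $v$, it adds at most $q - |N^s(v)\cap V_i| \leq q$ of $v$'s $L$-neighbors (plus $v$ itself in the $L_h$ case) to $V_i$, and each such step creates at least one new $H$-vertex. Since once colored an $L$-vertex is never recolored, the total number of $L_p$- and $L_h$-vertices that get colored is at most $q\,|H^{new}|$, apart from the $L_u^{new}$-vertices the algorithm colors without gain. This yields $|H^{new}| \geq (|L^{org}| - |L_u^{org}| - |L_u^{new}|)/q$. Invoking Lemma \ref{lm - lower bound on |H^new|, Growth-HardMHV} to substitute $|L_u^{new}| \leq O(\Delta^2)|H^{new}|$, and using $q \leq \Delta$, I would absorb the $|H^{new}|$ term on the right to conclude $|H^{new}| \geq (|L^{org}| - |L_u^{org}|)/O(\Delta^2)$.

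Next I would bound $OPT$ from above exactly as in Lemma \ref{lm - upper bound on OPT}. The original vertices split into $H^{org}, P^{org}, U^{org}$ and $L^{org}$, with $L^{org}$ further split into its four subtypes; the $U^{org}$- and $L_u^{org}$-vertices are destined to be unhappy, so in the best case only the vertices in $H^{org} \cup P^{org} \cup (L^{org}\setminus L_u^{org})$ are happy, giving $OPT \leq |H^{org}| + |P^{org}| + |L^{org}| - |L_u^{org}|$. Since every $P$-vertex has an uncolored neighbor (because $|N^u(v)| \geq q - |N^s(v)| > 0$) and each $L_p$-vertex sees at most $\Delta$ $P$-vertices, I get $|P^{org}| \leq \Delta|L_p^{org}| \leq \Delta(|L^{org}| - |L_u^{org}|)$ and hence $OPT \leq |H^{org}| + (\Delta+1)(|L^{org}| - |L_u^{org}|)$. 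Combining this with $SOL = |H^{org}| + |H^{new}|$ and the lower bound above, the same three-line chain as in the SoftMHV theorem gives $SOL \geq \Omega(\Delta^{-3})\,OPT$, and the polynomial running time of {\sc Growth-HardMHV} is clear.

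The only place demanding genuine care is the accounting of the enlarged $L_u$-subtype. In HardMHV, by Definition \ref{def - subtypes of L-vertex - HardMHV}, an uncolored vertex can be destined to be unhappy even when it touches no colored vertex (namely when $\deg(v) < q$), so $L_u^{org}$ now also contains such ``low-degree'' vertices. I would check that this disturbs neither bound: these vertices genuinely can never be happy, so they are correctly excluded from the potentially-happy set in the $OPT$ bound, and the charging argument of Lemma \ref{lm - lower bound on |H^new|, Growth-HardMHV} already treats them as contributing nothing. Verifying that this expanded $L_u$ still partitions the $L$-vertices cleanly, and that the per-step consumption bound $q \leq \Delta$ survives the exact-count additions $q - |N^s(v)\cap V_i|$, is the main (though routine) obstacle; everything else transfers mechanically from the SoftMHV analysis.
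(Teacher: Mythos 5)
Your proposal is correct and matches the paper's proof essentially step for step: the same consumption bound $|H^{new}| \geq (|L^{org}| - |L_u^{org}| - |L_u^{new}|)/q$, the same invocation of the charging lemma $|L_u^{new}| \leq O(\Delta^2)|H^{new}|$ together with $q \leq \Delta$, the same $OPT \leq |H^{org}| + (\Delta+1)(|L^{org}| - |L_u^{org}|)$ bound carried over from the MHV analysis, and the same final chaining. Your extra check that the enlarged $L_u$-subtype (uncolored vertices with $\deg(v) < q$) harms neither bound is a point the paper handles only implicitly, so it is a welcome addition rather than a deviation.
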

\begin{proof}
Each time an $H$-vertex is generated, at most $q$ $L$-vertices are consumed
(i.e., colored). So, for the number of newly generated $H$-vertices we have
$|H^{new}| \geq (|L^{org}| - |L_{u}^{org}| - |L_{u}^{new}|) / q$.
By Lemma \ref{lm - lower bound on |H^new|, Growth-HardMHV}, and noticing
that $q \leq \Delta$, we get
\begin{equation}
|H^{new}| \geq \frac{|L^{org}| - |L_{u}^{org}|}{O(\Delta^2)}.
\nonumber
\end{equation}

Let $OPT$ be the number of happy vertices in an optimal solution to
the problem. By the same reason as in Lemma \ref{lm - upper bound on OPT},
we obtain
\begin{equation}
OPT \leq |H^{org}| + (\Delta+1) (|L^{org}| - |L_{u}^{org}|). \nonumber
\end{equation}

Let $SOL$ be the number of happy vertices found by Algorithm
{\sc Growth-HardMHV}. Then we have
$SOL = |H^{org}| + |H^{new}| = \Omega(\Delta^{-3}) OPT$
by the above two inequalities.
As Algorithm {\sc Growth-HardMHV} obviously runs in polynomial time,
the theorem follows.
\qed
\end{proof}

\subsection{NP-Hardness of HardMHV}
\begin{theorem}
The HardMHV problem is NP-hard for any constant $k \geq 3$, where $k$ is
the color number in the problem.
\end{theorem}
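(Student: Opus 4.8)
The plan is to reduce from the $k$-MHE problem, which is NP-hard for every constant $k \ge 3$ by Theorem \ref{th - k-MHE is NP-hard}, to the HardMHV problem using the same color set $\{1,2,\dots,k\}$ and an appropriately chosen hard threshold $q$. Given a $k$-MHE instance $(G,c)$, let $D$ denote the maximum degree of $G$, and set $q = D+1$. The construction subdivides every edge and pads the subdivision vertex: for each edge $(u,v)\in E(G)$ I introduce a new uncolored \emph{center} vertex $x_{uv}$, replace $(u,v)$ by the two edges $(u,x_{uv})$ and $(x_{uv},v)$, and attach to $x_{uv}$ exactly $q-2 = D-1$ new uncolored \emph{helper} vertices, each of degree one (adjacent only to $x_{uv}$). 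The partial coloring $c'$ agrees with $c$ on the original vertices and leaves all centers and helpers uncolored. Call the resulting instance $(G',c',q)$. Note that $\deg_{G'}(x_{uv}) = q$, so the instance respects the assumption $q \le \Delta(G')$.

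The correctness hinges on a single rigidity observation: since $\deg_{G'}(x_{uv}) = q$, the vertex $x_{uv}$ is happy if and only if \emph{every} one of its neighbors shares its color, i.e. iff $c'(u)=c'(v)=c'(x_{uv})$ and all helpers are colored accordingly. Thus a center can be made happy exactly when the corresponding edge $(u,v)$ is happy in $G$, and in that case its color is forced to be the common color of $u$ and $v$. Every other vertex is permanently unhappy: a helper has degree $1 < q$; and an original vertex $v$ has, in $G'$, only the centers $x_{uv}$ of its incident edges as neighbors, so $|N^s(v)| \le \deg_G(v) \le D < q$ regardless of the coloring.

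From these two facts the optimum values coincide. In the forward direction, given an optimal $k$-MHE coloring $c^*$, I color each original vertex by $c^*$, each center $x_{uv}$ by $c^*(u)$, and each helper by the color of its center; then exactly the centers of happy edges are happy and nothing else is, so the HardMHV value equals the number of happy edges. Conversely, in any HardMHV coloring only centers can be happy, and each happy center $x_{uv}$ forces $c'(u)=c'(v)$ by the rigidity observation, so restricting the coloring to $V(G)$ yields at least as many happy edges as happy centers. Hence the optimum of $(G',c',q)$ equals the optimum of $(G,c)$, which completes the reduction. (If $G$ is disconnected one applies the argument componentwise, or simply observes that subdividing the edges of a connected $G$ yields a connected $G'$.)

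The step I expect to require the most care is the simultaneous choice of $q$ and the helper count. Making a center reach the threshold pushes $q$ downward, while forcing every original vertex to stay below the threshold pushes $q$ above $D$; the choice $q = D+1$ together with padding each center to degree exactly $q$ is what reconciles these opposing requirements and turns a center's happiness into an exact indicator of its edge being happy. Everything else is routine bookkeeping.
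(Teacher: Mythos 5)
Your proof is correct and is essentially the paper's own reduction: the paper likewise reduces from $k$-MHE by replacing each edge $(u,v)$ with a center $x_{uv}$ padded by $\Delta-1$ pendant satellite vertices and setting $q=\Delta+1$, so that originals and satellites can never be happy and $x_{uv}$ is happy iff $u$ and $v$ share a color. Your write-up merely spells out details the paper leaves implicit (coloring the private helpers by the center's color, verifying $q\le\Delta(G')$, and the componentwise remark on connectivity), with no substantive difference in approach.
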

\begin{proof}
We prove the theorem by reducing $k$-MHE (see Theorem \ref{th - k-MHE is NP-hard})
to HardMHV.

Given an instance $(G, c)$ of $k$-MHE, we construct an instance $(G', c', q)$
of HardMHV as follows. For each edge $(u, v) \in E(G)$, do the following.
Add a vertex $x_{uv}$ and $\Delta - 1$ vertices
$y_1^{uv}$, $y_2^{uv}$, $\cdots$, $y_{\Delta}^{uv}$, where $\Delta$ is
the maximum vertex degree of $G$. The vertices $y_i^{uv}$'s are called
{\em satellite vertices}.
Replace edge $(u, v)$ by two edges $(u, x_{uv})$ and $(x_{uv}, v)$.
Connect each vertex $y_i^{uv}$ to $x_{uv}$ via an edge $(x_{uv}, y_i^{uv})$.
Finally, let $q = \Delta + 1$. We thus get our HardMHV instance
$(G', c', q)$.

Since $q = \Delta + 1$, each original vertex $v \in V(G)$ and each newly
added satellite vertex cannot be happy no matter how the vertices in $G'$
are colored. For each edge $(u, v) \in E(G)$, since its corresponding vertex
$x_{uv}$ is of degree $\Delta + 1$, $x_{uv}$ is happy iff its
two neighbors $u$ and $v$ have the same color. This shows that
the optimum of $(G, c)$ is equal to that of $(G', c', q)$, finishing
the proof of the theorem.
\qed
\end{proof}

\end{document}